\title[Distributed Online Rollout for Multivehicle Routing in Unmapped Environments]{Distributed Online Rollout for Multivehicle Routing in Unmapped Environments}
\author{Jamison W Weber}
\affiliation{
  \institution{Arizona State University}
  \city{Tempe, Arizona}
  \country{United States}}
\email{jwweber@asu.edu}
\author{Dhanush R Giriyan}
\affiliation{
  \institution{Arizona State University}
  \city{Tempe, Arizona}
  \country{United States}}
\email{dgiriyan@asu.edu}
\author{Devendra R Parkar}
\affiliation{
  \institution{Arizona State University}
  \city{Tempe, Arizona}
  \country{United States}}
\email{dparkar1@asu.edu}
\author{Dimitri P Bertsekas}
\affiliation{
  \institution{Arizona State University}
  \city{Tempe, Arizona}
  \country{United States}}
\email{dbertsek@asu.edu}
\author{Andr\'{e}a W Richa}
\affiliation{
  \institution{Arizona State University}
  \city{Tempe, Arizona}
  \country{United States}}
\email{aricha@asu.edu}
\begin{abstract}
We consider a generalization of the well-known multivehicle routing problem: given a network, a set of agents occupying a subset of its nodes, and a set of tasks, we seek a minimum cost sequence of movements subject to the constraint that each task is visited by some agent at least once. The classical version of this problem assumes a central computational server that observes the entire state of the system perfectly and directs individual agents according to a centralized control scheme. In contrast, we assume that there is \emph{no centralized server} and that each agent is an individual processor with \emph{no a priori knowledge of the underlying network, including the locations of tasks and other agents.} Moreover, our agents possess \emph{strictly local communication and sensing capabilities} (restricted to a fixed radius), aligning more closely with several real-world multiagent applications. We present a \emph{fully distributed, online, and scalable reinforcement learning algorithm} for this problem whereby agents self-organize into local clusters to which they independently apply a multiagent rollout scheme. We demonstrate empirically via extensive simulations that there exists a critical sensing radius beyond which the distributed rollout algorithm begins to improve over a greedy base policy. This critical sensing radius grows proportionally to the $\log^*$ function of the size of the network and is therefore a small constant in practice. Our decentralized reinforcement learning algorithm achieves approximately a factor of two cost improvement over the base policy for a range of radii between two and three times the critical sensing radius. In addition, we observe in simulations that our distributed algorithm requires \emph{exponentially fewer computational resources} than the centralized approach, at only a \emph{small constant factor detriment in cost}. We validate our algorithm through physical robot experiments in continuous space and show that the resulting behavior reflects the discrete simulations.
\end{abstract}
\keywords{Reinforcement Learning; Distributed Computing; Multivehicle Routing}
\newcommand{\BibTeX}{\rm B\kern-.05em{\sc i\kern-.025em b}\kern-.08em\TeX}
\newtheorem{theorem}{Theorem}[section]
\newtheorem{lemma}[theorem]{Lemma}
\newcommand{\algmargin}{\the\ALG@thistlm}
\newlength{\whilewidth}
\newlength\myindent
\newif\ifcomment
\newcommand{\junk}[1]{}
\algnewcommand{\parState}[1]{\State%
  \parbox[t]{\dimexpr\linewidth-\algmargin}{\strut #1\strut}}
\gdef\@copyrightpermission{
	\begin{minipage}{0.3\columnwidth}
		\href{https://creativecommons.org/licenses/by/4.0/}{\includegraphics[width=0.90\textwidth]{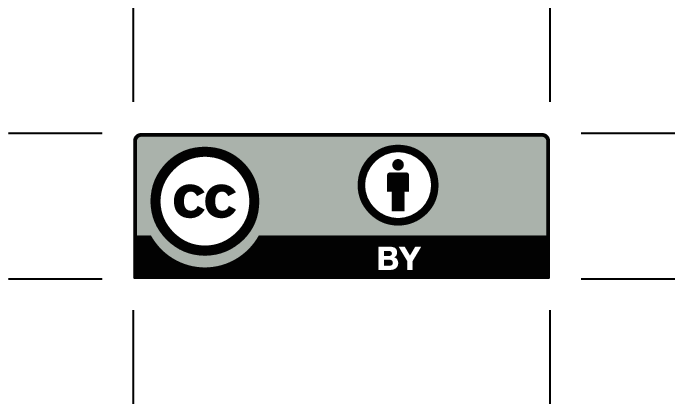}}
	\end{minipage}\hfill
	\begin{minipage}{0.7\columnwidth}
		\href{https://creativecommons.org/licenses/by/4.0/}{This work is licensed under a Creative Commons Attribution International 4.0 License.}
	\end{minipage}
	\vspace{5pt}
}
\begin{document}


\pagestyle{fancy}
\fancyhead{}


\maketitle 

\section{Introduction}
In this work, we consider a generalization of the well-studied multivehicle routing problem (MVRP).
The MVRP involves a set of agents (vehicles) and a set of tasks each occupying a position on an underlying network.
The objective is to compute a sequence of movement decisions for each vehicle such that the total number of vehicle movements is minimized, subject to the constraint that each task is visited at least once by some agent.
Classically, these sequences are computed by a centralized controller with complete state information that 
may freely direct all agents according to the control sequences it computes.

%
Our generalization addresses scenarios where the {\em network topology, including the initial positions of agents and task locations, is not known a priori}, and where a {\em centralized controller is not available}, as is potentially the case in many real-world scenarios, such as minefield disarmament \cite{minefield}, post-disaster
search and rescue \cite{golden_chapter_2014}, and unmanned aerial vehicle navigation \cite{uav}. 
In such settings, 
each agent must be in turn capable of performing individual computations and of {\em local sensing and inter-agent communication},
both limited to a radius around the agent's location in the network (e.g. as in packet radio networks).
Since no centralized computation is available, local communication between agents is necessary for coordination and information exchange. Since sensing is limited, exploration is required to locate the tasks before completing them.  We refer to this as the {\em Unmapped Multivehicle Routing Problem with Local Constraints (UMVRP-L)}. 
Note that MVRP can be viewed as a special case of UMVRP-L where the sensing radius is infinite.

The UMVRP-L is challenging for agent coordination algorithm design. 
Algorithmically, one may consider a reinforcement learning solution, as has been done with other MVRP variants~\cite{raza}, but as the underlying network is unknown, methods that require offline computation such as traditional Q-learning~\cite{sutton2018reinforcement}, policy approximation (via the training of parametric architectures)~\cite{neuro}, or executing policy iteration to near convergence~\cite{10.5555/1671238} may not be applicable.
Moreover, even online Q-learning requires extensive exploration over a single trajectory to learn Q-values, and also frequently revisits states to ensure convergence~\cite{onlineQ}. 
This exploration factors into the total cost of solving the problem.

This realization suggests rollout~\cite{bertsekasMAR} as a natural candidate, as it is a reinforcement learning algorithm that (under certain conditions) admits a fully online solution without requiring precomputation or iterative convergence, and produces reliably favorable results due to its fundamental connections to Newton's method~\cite{bertsekasMAR,newton,Bertsekas2021LessonsFA}.
However, rollout is a centralized control scheme that typically assumes information is universally shared among agents and a central dispatch (see \emph{classical information pattern}~\cite{bertsekasMAR}), and hence we present a decentralized version of rollout to address UMVRP-L.
In general, the rollout algorithm requires a base policy, which provides a benchmark for comparison.
For our distributed modification of rollout specifically, this comparison allows us to evaluate the impact of
\emph{communication} and \emph{coordination} in our distributed framework:
Communication refers to agents sharing sensed information, whereas coordination refers to agents organizing in such a way that they collectively solve a problem intelligently.  

\subsection{Our Contributions}
We present a distributed (decentralized) version of the 
Multiagent Rollout algorithm from~\cite{bertsekasMAR}, which we call Decentralized Multiagent Rollout (DMAR).
In this approach, we assume each agent runs DMAR independently
and in parallel with other agents.
DMAR uses simple local rules to induce iterative self-organization of the agent set into temporary 
clusters of bounded size.
These clusters generate control sequences locally using multiagent rollout (equipped with a greedy nearest-neighbor base heuristic~\cite{bertsekasMAR}) and execute them before systematically dissolving. Agents alternate between self-organizing into clusters and randomly searching the environment for tasks not visible to any cluster, making DMAR applicable to {\em fully unmapped environments}, unlike the general multiagent optimization literature.


We formally prove 
that DMAR is \textit{probabilistically complete}~\cite{probabilisticCompleteness}, i.e. for any solvable instance, the probability that {\em DMAR correctly solves UMVRP-L} converges to unity as the running time tends to infinity.
We also show that each \emph{round} of DMAR terminates in constant time and that the \emph{expected number of rounds required to solve the instance is bounded by $\mathcal{O}(N^2)$}, where $N$ is the number of nodes in the network. 
Moreover, we conduct extensive 
simulations that compare the costs incurred by DMAR with those of a modification of DMAR that we call \emph{the greedy-exploration policy}---a base policy that strictly uses a greedy nearest-neighbor base heuristic (without application of multiagent rollout) within the clusters generated. 
Note that the greedy-exploration policy
does not involve coordination of the control policies within each cluster.
In these experiments, we modulate the sensing radius over a wide variety of network topologies and sizes, the number of tasks, and agent-to-task ratios.

We show 
that there is a {\em critical sensing radius} 
beyond which DMAR begins to improve over the greedy-exploration policy: The critical sensing radius 
is closely approximated by
$\log_2^*(N)$,\footnote{$\log_2^*(N)=1+\log_2^*(\log_2(N))$, for $N>1$ and $\log_2^*(N)=0$, for $N\leq 1$.
   Note that, e.g,  $\log_2^*(N)\le 5$ for any $N\le 10^{10000}$.} and is therefore a small constant in practice regardless of network size.
%
We show empirically that for sensing radii between 
$2\log_2^*(N)$ and $3\log_2^*(N)$ (we call those the \emph{effective radii}), 
DMAR generates trajectories of roughly a {\em half
the cost of those produced by the base policy} on average. 
We compare the costs of DMAR to that of the centralized multiagent rollout algotithm~\cite{bertsekasMAR}.  For the effective radii range, 
DMAR shows an {\em exponential improvement in computational costs over 
the centralized multiagent rollout algorithm}, while incurring just {\em a constant factor (approximately 3) increase in movement costs}.


Lastly, we validate our algorithm through physical robot experiments in continuous space using the Robotarium platform~\cite{robotarium} as a proof-of-concept.
The results reflect what we observed in discrete-space simulations, demonstrating the algorithm's robustness to sensor noise, which would underscore the potential for actual deployments of the algorithm 
in scenarios where it is not reasonable to assume that a complete mapping of tasks and agents exists.


\subsection{MVRP and Other Related Work}
A discussion on reinforcement learning techniques already appeared in the introduction, so here we focus more on the MVRP literature and agent-based distributed systems. As  the MVRP is NP-hard~\cite{vehicle_routing} (even when the network is planar~\cite{planar}), several heuristics have been proposed (e.g.,~\cite{bertsekasMAR, liu2023heuristics}).
Researchers have considered many fully observable and centralized variants that introduce further constraints and complexities such as limited vehicle capacity, heterogeneous capacities, route length limits, time windows when
tasks may be completed, multiple depots, multi-criteria objective functions, 
stochastic time and transition variants, and demand constrained and weighted MVRP, e.g.~\cite{toth,Han_2018,BRAEKERS2016300}.

Most relevant to UMVRP-L is probably the \textit{dynamic MVRP} variant~\cite{toth, psaraftis}, where certain input information such as the network, vehicle availability, or task locations may not be known in advance, but  becomes available over time (perhaps probabilistically).
Dynamic MVRP assumes a single centralized dispatcher with instantaneous access to information as it becomes available, instant communication with all vehicles, and persistent access to all previously available information---all of which we do not assume in this work.
Although UMVRP-L has commonalities with dynamic MVRP, it is fundamentally different due to its decentralized nature and strict local constraints. To the best of our knowledge, variants of vehicle routing where a centralized planner is not available have not yet been considered by researchers, and hence, no benchmark algorithms for UMVRP-L exist in the literature.

AI researchers have also considered multiagent problems where the full state is not directly observable to the agents, but
the proposed solutions in this domain require centralized computation---often executed offline~(e.g. \cite{bengio,bhattacharya,foerster,gupta,kar}).
A formulation known as a \emph{decentralized partially observable Markov decision process} (Dec-POMDP)  was introduced in~\cite{decpomdp} that combines notions from distributed computing and optimal control in partially observable environments, but optimally solving Dec-POMDPs was shown to be intractable. 
In these formulations, the environment is partially observable to each agent, but the information sensed by each agent is not universally shared.
Approximate solutions to Dec-POMDPs were explored in~\cite{macrocontrols,planning_decpomdp, policy_decpomdp}, but ultimately relied on intensive offline and centralized simulations.

More broadly, there have been several recent advances in the field of distributed computing on engineering collective behavior for systems of agents.
Such agents are typically restricted to only local computations and interactions with other agents to achieve a desired system-wide behavior, as in population protocols~\cite{population_protocols} and programmable matter~(e.g. \cite{amoebot,chirikjian,self-assembly,hexagon,modular}).

\section{Model and preliminaries}
\label{sec:model}

In general, one assumes a regular discretization of 2D space for the MVRP, such as a 2-dimensional \emph{rectangular grid graph} (as in~\cite{10.5555/1671238}), as we do in this work. An instance of the MVRP consists of a rectangular grid graph $G(V,E)$ (e.g. in Figure~\ref{fig:soac-example}) on $N$ nodes, a set $S$ of $m$ \emph{agents}, where each agent occupies a node in the graph (co-location is permitted), and 
a subset $\tau$ of nodes that 
represent \emph{task} locations in $G$.
We also assume there is a subset $O$ of nodes that agents may not occupy, which allows us to represent \emph{obstacles} in the environment (represented by the black squares in Figure~\ref{fig:soac-example}).
Moreover, each agent has a unique ID.
This model operates in discrete time steps, during which an agent may move to a node adjacent to its current location.
Whenever an agent visits a task node, that task is considered \emph{completed}, and is removed from the network.
At each time step, a unit cost is incurred for each agent that moves from one node to an adjacent node. A control sequence $Y_i$ for agent $i$ is given by the sequence of nodes $v_{(i,1)}, v_{(i,2)}, \ldots$ visited by the agent.
A solution $Y$ is given by $m$ sequences $Y_1,\ldots,Y_m$ of controls corresponding to each agent in $S$.
The cost $C(Y)$ of a solution $Y$ is the sum of the $m$ control sequence lengths, i.e. $C(Y)=\sum_{Y_i\in Y}|Y_i|$.
A solution $Y$ is optimal whenever $C(Y)$ is minimum.

The UMVRP-L is a generalization of MVRP where $G,O,S,\tau,$ are not known a priori and no centralized computation is permitted.
We assume that each agent has some on-board computational capability and enforce that any computation must be executed by some agent, as opposed to a central controller.
At any given time, each agent $a$ occupies a position $\rho(a)\in V$.
We also define
the \emph{view} of an agent $a$, denoted as $\nu(a)$, to be the subgraph of the grid induced by $\rho(a)$ and \emph{all} nodes (including obstacles) within at most $k$ hops of $\rho(a)$, where $k$ is an integral radius parameter.
An agent's view describes what it is able to sense from its location (i.e. obstacles, 
neighboring nodes, agents, and tasks). 
Depending on the application context, there may be other (more appropriate) choices for the definition of a view, such as those where obstacles restrict what an agent can observe.
We provide a discussion of the impact of various view definitions in Appendix-A~\cite{weber2023distributed}.
Let $\nu(S')=\bigcup_{a\in S'}\nu(a)$, and refer to this as the \emph{view} of an agent set $S'\subseteq S$.



Our distributed model operates under \emph{parallel synchronous 
time steps}.
That is, at each discrete time step $t$, all agents are \emph{activated} simultaneously in parallel, and thus may only act upon information communicated at time $t-1$.
An {\em agent activation} may consist of either moving, performing some 
computation, or simply waiting.
For an agent $a$, the input of a computation performed by $a$ is limited to 
$\nu(a)$ (including the memories of the agents therein), and any agent $b\in \nu(a)$ (including $a$) may have their memory modified directly by $a$ as an output to that computation.

\paragraph{\bf\em Rollout and Multiagent Rollout} 
From a centralized perspective, one can view an instance of MVRP as an infinite horizon dynamic programming problem that can be solved approximately in an online setting using \emph{approximation in value space}, of which the rollout algorithm is a special case~\cite{bertsekasMAR}.
That is, we approximate the \emph{cost-to-go} terms of the Bellman equations via online heuristics. 
Denote $\Omega$ as the state space representing all possible agent and task configurations over $G$, plus an absorbing terminal state representing the completion of all tasks.
Let $U(x)$ denote the set of $m$-dimensional control vectors permissible at state $x\in \Omega$.
A system dynamics function $f(x,u)$ returns the resulting state obtained from application of control $u\in U(x)$ at state $x\in \Omega$.
Lastly, denote $g(x,u)$ as the cost function indicating the total number of vehicle movements at state $x$ given control $u$.
Rollout generates an approximately optimal state and control sequence $x_0,\tilde{u}_0,x_1,\tilde{u}_1,\ldots$
where each successive control $\tilde{u}_s$  at state $x_s$ is determined by
\begin{equation}
\tilde{u}_s\in \arg\min_{u\in U(x_s)}\{g(x_s,u)+\tilde{H}(f(x_s,u))\}
     \label{eqn:bellman}
\end{equation}
and results in arrival at state $x_{s+1}$.
Here $\tilde{H}(f(x_s,u))$ refers to the approximate cost-to-go obtained from simulation of a heuristic (also known as a \textit{base policy}).

Note that the minimization operation in Equation~\ref{eqn:bellman} is exponential in $m$, however, a recent reformulation of a multiagent problem to an equivalent problem with a larger state space, but simpler control space was  described by Bertsekas in~\cite{bertsekasMAR}. 
This reformulation was the basis for substantial simplification of rollout and policy iteration algorithms for multiagent problems. 
The resulting algorithm (called {\it multiagent rollout}) dramatically reduces the computational complexity of each minimization operation of the standard rollout algorithm (from exponential to linear in the number of agents), while maintaining its cost improvement properties. 
This exponential gain arises from a sequential rollout control computation on an agent-by-agent basis, where the control for each agent is computed based on a partial list of controls computed for each agent earlier in the sequence, hence greatly restricting the number of multi-component controls considered.
Specifically, the control $u^a$ for each agent $a$ is computed in a sequence $u^{a_1},u^{a_2},\ldots,u^{a_m}$ such that when $u^{a_i}$ is computed, each $u^{a_j}$ for $1\le j<i$ has been determined, and for any $u^{a_j}$ where $i<j\le m$, and for all future controls, a base policy is used to estimate the cost-to-go in Equation~\ref{eqn:bellman}.
See~\cite{bertsekasMAR} for a comprehensive description.

\section{Decentralized Multiagent Rollout}
\label{sec:algorithms}
In this section, we present and analyze our Decentralized Multiagent Rollout Algorithm (DMAR), described as a series of phases
that, when executed in sequence and repeated, will solve the UMVRP-L.
DMAR is a fully distributed and local algorithm that
applies multiagent rollout from~\cite{bertsekasMAR} 
locally, and therefore can handle unmapped environments. 
In addition to their (unique) IDs, DMAR assumes that the agents know a constant parameter $\psi$ (common to all agents), which will determine an upper bound on the size of the clusters formed by DMAR. During its execution, the algorithm will build trees of agents, each of which corresponds to a current cluster, and so each agent also maintains a parent pointer, a constant number $c$ of child pointers, and various flags with values \textsc{T,F}  used for message passing. 
At a high level,
a {\em round of DMAR} consists of the following sequence of {\em phases}, which the agents collectively repeat:
\begin{enumerate}
\item \textbf{Self-Organizing Agent Clusters (SOAC)}: Agents self-or\-ganize into clusters of constant size $c^{\mathcal{O}(\psi)}$.
For each cluster $K$, the agents in $K$ will locally form a tree $\mathcal{T}_K$, 
whose root will become the cluster leader.
If two agents share an edge in $\mathcal{T}_K$, then they appear in their respective views.

\item\textbf{Local Map Aggregation (LMA)}: For each cluster $K$, agents in $K$
route their views through $\mathcal{T}_K$ to the cluster leader.

\item \textbf{Team-Restricted Multiagent Rollout (TMAR)/Execute Movement (EM)}: For each cluster $K$, the leader computes a set $\mathcal{R}$ of control sequences (one for each agent in $K$) using multiagent rollout 
on the assembled view information. $\mathcal{R}$ is then broadcast over $\mathcal{T}_K$.
If an agent does not belong to a cluster, it follows a random trajectory of bounded length. Else, each agent moves along its assigned control trajectory, after which it unassigns itself from its cluster. 
\end{enumerate}

\begin{algorithm}
\caption{Local DMAR round protocol for agent $a$}
\label{alg:overview}
\begin{algorithmic}[1]
    \State{Set $a.msgFlag\gets a.ldrFlag\gets\textsc{F}$; Set $a.clusterID\gets a.joinMessage\gets a.children\gets a.parent\gets\textsc{null}$.}
    \State{\textbf{if} there is a task in $\nu(a)$, set $a.ldrFlag\gets\textsc{T}$.}\Comment{begin SOAC} \label{alg:soac-start}
    \If {$a.ldrFlag$ and $\exists x\in \nu(a):x.ldrFlag$ and $x.ID>a.ID$}
        \State{set $a.ldrFlag\gets\textsc{F}$.}
    \EndIf
    \State{\textbf{if} $a.ldrFlag$ \textbf{then} set $a.clusterID\gets a.ID$.}  
    
    \Repeat  \Comment{form initial clusters}\label{alg:formclusters}
        \If{ $a.clusterID=\textsc{null}$}
        \If {there is an agent $x\in\nu(a)$ with $x.clusterID\ne\textsc{null}$ \\ \indent\indent\indent and $|x.children|<c$} \label{start-append}
        \State{Append $a$ to $x.children$; set $a.parent\gets x$.}
        \State{Set $a.clusterID\gets x.clusterID$.}\label{end-append}

        \Else 
        \If {there are agents $ x_1,x_2,\ldots ,x_r \in$  $\nu(a)$ from distinct \indent\indent clusters, where $r\le c$} \Comment{cluster-join}\label{alg:clusterjoin-start}
            \State{Set $a.ClusterID\gets a.ID$.} 
            \For{each $x_i$}
            \State{\textbf{try} set $x_i.joinMessage\gets a.ID$.  }\label{try}
            \State{}\Comment{Note that $x_i$ may reject the join request.} 
            \EndFor
        \EndIf
          \EndIf  
            \EndIf
            \Repeat
                \If{$a.joinMessage=x.ID$}
                    \State{Set $a.ClusterID\gets x.ClusterID$.}
                    \For{each child $b$ of $a$}
                    \State{\textbf{if} $x\ne b$, \textbf{try} set $b.joinMessage\gets a.ID$.}
                    \EndFor
                    \If{$a.parent\ne x$}
                    \State{\textbf{try} set $a.parent.joinMessage\gets a.ID$.}
                    \EndIf
                    \State{Set $a.parent\gets x$; append $a$ to $x.children$.}
                    \State{Set $a.children\gets a.joinMessage\gets\textsc{null}$.}
                \EndIf
            \Until{$\mathcal{L}(\psi)$ (synchronized) iterations} \label{clusterjoin-end}

    \Until{$\lceil\log_2{\psi}\rceil$ (synchronized) iterations}  \label{alg:soac-end}
    \If{$a.clusterID\ne \textsc{null}$} \Comment{begin LMA} \label{alg:lma-start}
        \State{Construct map $\mathcal{M}=\nu(a)$.}
        \State{\textbf{If} $a.parent.clusterID=a.parent.ID$, set $a.msgFlag\gets\textsc{T}$.}
        \Repeat \label{downloop}
            \If{$a.msgFlag$}
                \State{Set $\mathcal{M}(a) \gets \mathcal{M}(a)\cup \mathcal{M}(a.parent)$, preserving}
                \Statex{\indent\indent\indent coordinates relative to $\mathcal{M}(a.parent)$.}
                \State{\textbf{if} $a.children\ne\textsc{null}$, set $a.msgFlag\gets\textsc{F}$.}
                \State{\textbf{for each} child $b$, set $b.msgFlag\gets\textsc{T}$.}
               
            \EndIf
        \Until{$\mathcal{L}(\psi)$ (synchronized) iterations}
        \Repeat \label{uploop}
            \If{$a.msgFlag$ and $a.clusterID\ne a.ID$}
                \State{Set $\mathcal{M}(a.parent) \gets \mathcal{M}(a)\cup \mathcal{M}(a.parent)$.}
                \State{Set $a.parent.msgFlag\gets\textsc{T}$; $a.msgFlag\gets\textsc{F}$.}
            \EndIf
        \Until{$\mathcal{L}(\psi)$ (synchronized) iterations}   \label{alg:lma-end}
    \EndIf

    
        \If{$a$ is the leader of a cluster $K$} \Comment{begin TMAR/EM} \label{alg:TMAR-start}
            \State{Compute $
            \mathcal{R}\gets \textsc{MR}(\mathcal{M}(K))$ and send $\mathcal{R}$ to each child.}\label{alg:rollout}
       \EndIf     
        \Repeat
            \State{\textbf{if} $a.clusterID\ne \textsc{null}$ and $a$ has received $\mathcal{R}$ from parent, \indent \textbf{then} send $\mathcal{R}$ to each child of $a$.}        
     \Until{$\mathcal{L}(\psi)$ (synchronized) iterations}
     \State{\textbf{if} $a.clusterID\ne\textsc{null}$ \textbf{then} move $a$ according to $\mathcal{R}^{a.ID}$.} \label{alg:EM-start}

        \State{\textbf{else} $a$ executes at most $\lambda(\psi)$ controls, each selected uniformly at random until there is a task in $\nu(a)$.}\label{alg:EM-end}
    

     
\end{algorithmic}
\end{algorithm}

We will show that each \emph{round} 
of DMAR terminates in constant time, and that the number of rounds required for DMAR to solve an instance of UMVRP-L is polynomially bounded in expectation (Theorem~\ref{thm:DMARexpected}).
A pseudocode description of a round of DMAR is given in Algorithm~\ref{alg:overview}.
We denote $\mathcal{L}(\psi)=\frac{3}{2}(\psi-2)$ 
as the maximum possible height of any agent tree (see Lemma~\ref{lemma:SOAC2}) and reference the function $\textsc{MR}(\mathcal{M})$, which takes an instance $\mathcal{M}$ of MVRP and returns a set of control sequences $\mathcal{R}$ as computed by multiagent rollout (see Section~\ref{sec:model}).
Note that each agent runs DMAR individually on their respective processor, but all agents remain synchronized in their execution of DMAR. 
We achieve this by ensuring that each phase is executed using exactly the same number (a function of $\psi$) of operations regardless of which agent is performing the computation.
For example, an agent that is not in a cluster will wait approximately $2\cdot\mathcal{L}(\psi)$ time steps for Lines~\ref{alg:lma-start}-\ref{alg:lma-end} before continuing. 

\paragraph{\textbf{Self-Organizing Agent Clusters} (Algorithm~\ref{alg:overview}, Lines~\ref{alg:soac-start}-\ref{alg:soac-end})}
To achieve the clustering, agents that see tasks become temporary cluster \emph{leaders} via a local leader election scheme.
The clusters then grow by iteratively appending nearby agents (limited by parameter $\psi$).
At the start of SOAC, each agent sets a leader flag ($ldrFlag$) to \textsc{true (T)} if it sees a task.
Then, if an agent $a$ sees a task while simultaneously seeing other agents that each see tasks, $a$ disqualifies itself from becoming a leader if any such agent has a larger ID value than $a$'s by setting its leader flag to \textsc{false (F)}. 
Any remaining agents after the above process form leaders of new unique clusters.
Next, SOAC iteratively appends agents to existing clusters over multiple 
time steps.
At iteration $i$ of the loop beginning in Line~\ref{alg:formclusters}, if there exists an agent $a$ that does not belong to a cluster, $a$ searches its view for any agent $a'$ that already belongs to some cluster $K'$ and joins $K'$ by becoming a child of $a'$. 
Consequently, as each cluster $K$ grows, a bi-directional (through parent and child pointers) tree structure $\mathcal{T}_K$ is maintained that manages the flow of messages in $K$.
Moreover, when $a$ joins a cluster $K$, it obtains $K$'s cluster ID  from its parent in $\mathcal{T}_K$, which corresponds to the ID of the leader of $K$.

We found in preliminary simulations that communication between adjacent clusters generally leads to significantly better solutions.
As such, for adjacent clusters, SOAC employs a mechanism to combine them into potentially larger clusters called \emph{cluster-join} (Lines~\ref{alg:clusterjoin-start}-\ref{clusterjoin-end}).
For any agent $a$ that is not in a cluster at iteration $i$ after the appending process  (Lines~\ref{start-append}-\ref{end-append}), if $a$ sees other agents in distinct clusters, $a$ becomes the leader of a new (possibly larger) cluster that attempts to include all agents in neighboring clusters.
Agent $a$ sends messages ($joinMessage$) to all such agents in its view, which are then broadcast throughout the respective agent trees.
These messages request that its recipients adopt the new cluster ID (that of agent $a$) and reconfigure their tree pointer structure such that their tree is rooted at $a$.
If an agent receives multiple cluster reassignment messages simultaneously, it will disregard all but one (chosen arbitrarily, see line~\ref{try}).
Note that because of the possibility that an agent may receive multiple cluster reassignment requests simultaneously, not all agents of these trees may join the new cluster rooted at $a$, since some may join other clusters in this manner.
However, in simulations we observed this generally resulted in larger clusters, which led to greatly improved performance. 
Each agent continues propagating join messages for $\mathcal{L}(\psi)$ 
time steps per iteration.
After a SOAC run terminates, the process yields a collection of clusters (and possibly some agents that belong to no clusters).
Generally, if the would-be parent of an agent $a$ attempting to join a cluster has no available child pointers, then $a$ simply does not join the cluster.
Figure~\ref{fig:soac-example} provides a visualization of a SOAC execution.
Lemma~\ref{lemma:SOAC2} guarantees properties of the tree structure of each cluster formed by SOAC, and bounds the running time.


\begin{figure}
     \centering
     \begin{subfigure}[b]{.9\textwidth/4}
         \centering
         \includegraphics[width=\textwidth]{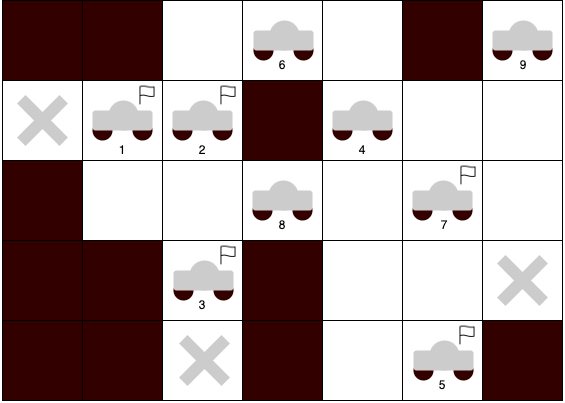}
        
         \label{fig:soaca}
     \end{subfigure}
     \hfill
     \begin{subfigure}[b]{.9\textwidth/4}
         \centering
         \includegraphics[width=\textwidth]{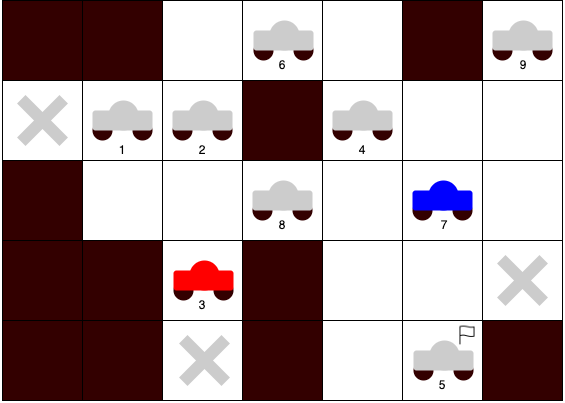}
         
         \label{fig:soacb}
     \end{subfigure}
    \par\bigskip
     \begin{subfigure}[b]{.9\textwidth/4}
         \centering
         \includegraphics[width=\textwidth]{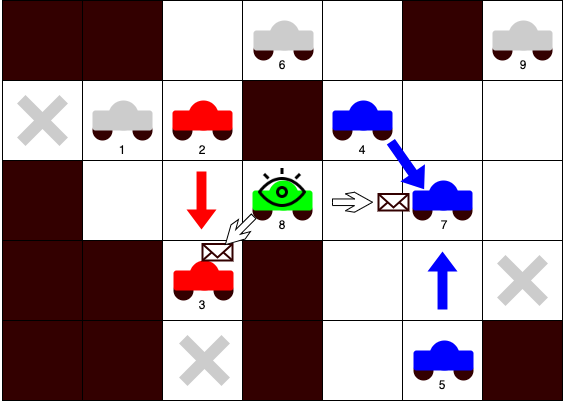}
         
         \label{fig:soacc}
     \end{subfigure}
     \hfill
     \begin{subfigure}[b]{.9\textwidth/4}
         \centering
         \includegraphics[width=\textwidth]{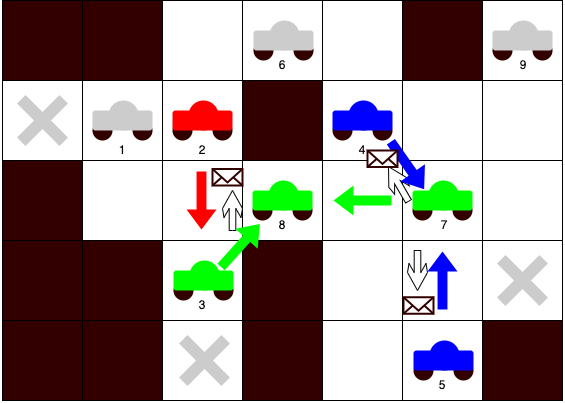}
         
         \label{fig:soacd}
     \end{subfigure}
     \par\bigskip
     \begin{subfigure}[b]{0.9\textwidth/4}
         \centering
         \includegraphics[width=\textwidth]{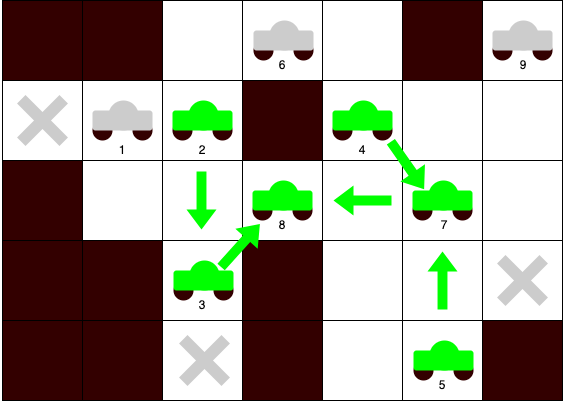}
         
         \label{fig:soace}
     \end{subfigure}
     \hfill
     \begin{subfigure}[b]{0.9\textwidth/4}
         \centering
         \includegraphics[width=\textwidth]{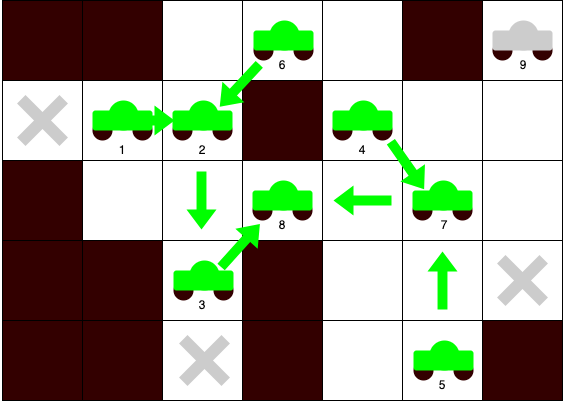}
         
         \label{fig:soacf}
     \end{subfigure}
        \caption{SOAC Execution. Read left to right, top to bottom: Cars represent agents. Gray crosses are tasks. Agent colors indicate cluster membership; gray agents are in no cluster. IDs are indicated below each agent. Flags indicate which agents see tasks. Assume $k=2$, $\psi=4$. (top left) No agent is in a cluster; 1,2,3,5,7 see tasks. (top right) 1,2,5 see agents with larger IDs who see tasks, so disqualify themselves. 3,7 become leaders of new clusters. (middle left) Iteration 1 begins. 2,4,5 join clusters, colored arrows indicate tree pointers. 8 sees agents in different clusters. 8 declares self new leader and initiates a join with messages. (middle right) 3 and 7 join 8's cluster. Join messages are propagated to 2,4,5  (bottom left) 2,4,5 join the green cluster. (bottom right) Iteration 2 begins. 1,6 join green cluster. 9 was too distant so joins no cluster.   }
        \label{fig:soac-example}
\end{figure}

\begin{lemma}
\label{lemma:SOAC2}
Let $K\subseteq S$ be an arbitrary cluster of agents in $G$ obtained at the end of a round of SOAC.
Then
(i) $K$ forms an agent tree $\mathcal{T}_K$ rooted at a leader $\ell$ with height at most $\mathcal{L}(\psi)=\mathcal{O}(\psi)$ and $|\mathcal{T}_K|= c^{\mathcal{O}(\mathcal{L}(\psi))}$ agents; (ii) all agents in $\mathcal{T}_K$ associate their membership in $K$ with $\ell$; and (iii) SOAC terminates in $\mathcal{O}(\psi\log_2{\psi})$ 
time steps.
\end{lemma}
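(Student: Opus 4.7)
The plan is to address the three claims in reverse order, since (iii) is a direct read of the algorithm's iteration counters, (ii) is a corollary of invariants maintained by the tree operations, and the structural claim (i) carries the main work.

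For (iii), the outer loop of SOAC (Lines~\ref{alg:formclusters}-\ref{alg:soac-end}) runs $\lceil \log_2 \psi \rceil$ synchronized iterations, each containing an inner propagation loop of $\mathcal{L}(\psi) = \frac{3}{2}(\psi-2) = \mathcal{O}(\psi)$ iterations, plus the constant-time leader-election stage that precedes it (Lines~\ref{alg:soac-start}--\ref{alg:formclusters}). Summing yields the stated $\mathcal{O}(\psi \log_2 \psi)$ bound.

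For (i), I would perform an induction on the outer-loop iteration $i$, maintaining two invariants: (a) agents sharing a $clusterID$ form a rooted tree under their parent/child pointers, and (b) this tree's height $h_i$ satisfies $h_i \le 2 h_{i-1} + \mathcal{O}(1)$. Invariant (a) is preserved by inspection: the appending step (Lines~\ref{start-append}--\ref{end-append}) attaches a single leaf whose parent pointer is set to an in-cluster node, so no cycles arise; the cluster-join step propagates outward from the newly declared leader $a$ and re-orients parent pointers along each outgoing message path, which again cannot produce cycles. For (b), appending increases height by at most one, while a cluster-join at $a$ merges up to $c$ pre-existing trees by rerooting each at its unique representative $x_j \in \nu(a)$ and making $x_j$ a child of $a$; the height of a tree of height $h$ rerooted at an arbitrary vertex is bounded by its diameter $2h$, so each new subtree of $a$ has height at most $2(h_{i-1}+1)$. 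Unrolling the recurrence with $i = \lceil \log_2 \psi \rceil$ yields $h_i = \mathcal{O}(\psi) = \mathcal{O}(\mathcal{L}(\psi))$, and the size bound $|\mathcal{T}_K| = c^{\mathcal{O}(\mathcal{L}(\psi))}$ then follows at once from the branching bound of $c$. Claim (ii) falls out of the same induction: the only events that modify $a.clusterID$ in Algorithm~\ref{alg:overview} are self-declaration as a leader ($a.clusterID \gets a.ID$) and copying a parent's $clusterID$ upon being appended or rerooted by a $joinMessage$; this preserves the invariant that every agent in the tree carries the root's ID as its $clusterID$, provided each round's propagation loop of length $\mathcal{L}(\psi)$ is long enough to reach every member of the tree---which is exactly the height bound proved in (i).

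The main obstacle I anticipate is precisely this interdependence: cluster-join reconfiguration is only faithful if the $\mathcal{L}(\psi)$ inner iterations suffice to traverse the deepest tree, and that fact is needed to justify the height bound inductively. The cleanest resolution is a joint induction over outer iterations showing that, at the start of iteration $i$, the propagation budget $\mathcal{L}(\psi)$ already exceeds $h_{i-1}$. A secondary subtlety is the race condition when multiple $joinMessage$s reach the same agent simultaneously (mediated by the \textbf{try} primitive at Line~\ref{try}); I would handle this by observing that such rejections only remove agents from a prospective merged cluster, which can only \emph{decrease} the resulting tree size and height, so the recurrence in (i) remains a valid upper bound.
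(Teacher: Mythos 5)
Your proposal is correct and follows essentially the same route as the paper's proof: the same append-then-double-plus-one height recurrence driven by the observation that rerooting a tree bounds the new height by its diameter (at most twice the old height), the same geometric size bound from the branching factor $c$, the same argument that the $\mathcal{L}(\psi)$ propagation budget dominates the tallest tree's height to give leader agreement in (ii), and the same iteration count for (iii). Your explicit handling of simultaneous \textbf{try} rejections (losing agents only shrinks the tree, so the recurrence stays an upper bound) mirrors the paper's induced-subgraph remark, so no gap remains.
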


\begin{proof}
    (i) Each tree is formed around a single leader.
    An unassigned agent appends itself as a child of a single parent node already in $\mathcal{T}_K$, for some current cluster $K$, adopting the cluster ID $K$ of its parent (which can be traced back to the leader of $K$). Thus $\mathcal{T}_K$ will form a single acyclic connected component, for each $K$.
    In the cluster-join process (initiated by an agent $b$), the parent pointer of an agent $a$ will always point to the agent from which it received (and accepted) a join request, which can likewise be traced back to $b$. 
    As agent $a$ removes its child pointers after propagating the request, it becomes a leaf of the new tree, and any of its former children may undergo the same process in the next loop iteration.
    At a loop iteration $i$, we claim that the height of the largest cluster tree first increases by at most one (Lines~\ref{start-append}-\ref{end-append}) and then at most doubles (plus one for the joining agent $b$) during the cluster-join process.
    This follows since for any agent $x$ in $\mathcal{T}_K$, the neighborhood of $x$ in $\mathcal{T}_K$ (i.e. not including $b$) after the cluster-join process is an induced subgraph of $x$'s previous neighborhood, and any neighbors it loses are no longer in $\mathcal{T}_K$.
    As $x$ does not gain new neighbors in $\mathcal{T}_K$, 
    the longest path in $\mathcal{T}_K$ is either the same or some shorter path (in the case where not all of $K$ remains in the tree) after pointer reassignment.
    The longest possible path in any undirected rooted tree $\mathcal{T}$ is 
    at most twice the height of $\mathcal{T}$.
    Thus, the maximum tree height at iteration $i$ is given by $\mathcal{L}(i)=0$ if $i=1$, and $\mathcal{L}(i)\le 2(\mathcal{L}(i-1)+1)+1$ for $i>1$. 
    The solution of this relation is $\frac{3}{2}(2^i-2)=\mathcal{O}(2^{\log_2{\psi}})=\mathcal{O}(\psi)$. 
    The maximum number of agents in $\mathcal{T}_K$ is $\sum_{j=1}^{\mathcal{L}(\psi)}c^j=\frac{c^{\mathcal{L}(\psi)+1}-1}{c-1}\le c^{\mathcal{O}(\mathcal{L}(\psi))}$.
    The height bound on an agent tree implies waiting $\mathcal{L}(\psi)$ 
    time steps per iteration is enough time for join messages to propagate through the tallest tree, and so all agents in $\mathcal{T}_K$ will agree on the leader ID, implying (ii). Counting shows that SOAC terminates in $\mathcal{O}(\psi\log{\psi})$ 
    time steps.
\end{proof}

\paragraph{\textbf{Local Map Aggregation} 
 (Algorithm~\ref{alg:overview}, Lines~\ref{alg:lma-start}-\ref{alg:lma-end})}
At the end of SOAC, a subset of the agents is organized into clusters, for each of which there exists a respective leader.
In the second phase, the objective is that for each independent cluster $K$ with leader $\ell$, agent $\ell$ obtains a map of $K$, which reflects the collective view of the cluster ($\nu(K)$).
Moreover, this map has a relative coordinate labelling scheme whereby the coordinates of $\ell$ are $(0,0)$.
At a high level, agents in $K$ convey their views to their neighbors in $\mathcal{T}_K$.
These views then aggregate until $\ell$ obtains a view of the entire cluster.

An agent $a$ possesses a \emph{map} structure $\mathcal{M}(a)$ that includes $\nu(a)$, as well as possibly the views of other agents (initially $\mathcal{M}(a)=\nu(a)$).
Note that their may be agents in the view of $a$ that do not belong to the cluster of which $a$ is a member, but $a$ disregards these.
A \emph{complete map} $\mathcal{M}(K)$ of a cluster $K$ includes the union of views of every agent in $K$, i.e. $\mathcal{M}(K)=\nu(K)$.
For two maps $\mathcal{M}$ and $\mathcal{M}'$, we use $\mathcal{M}\cup \mathcal{M}'$ to denote the component-wise union of the maps.
Agent $\ell$ will initiate the propagation of messages down to the leaves of $\mathcal{T}_K$.
Agent $\ell$ generates a map $\mathcal{M}(\ell)=\nu(\ell)$, which has nodes labelled with relative coordinates such that $\rho(\ell)=(0,0)$.
Agent $\ell$ passes this map to each child $a$, and agent $a$ sets $\mathcal{M}(a)\gets \mathcal{M}(a)\cup \mathcal{M}(\ell)$.
During this union operation, agent $a$ extends the relative coordinates of $\mathcal{M}(\ell)$ to $\mathcal{M}(a)\cup \mathcal{M}(\ell)$.
This message passing process from parent to child repeats using a message flag ($msgFlag$). 
After a sufficient amount of time passes, each leaf $a$ sends $\mathcal{M}(a)$ (which represents a union of maps obtained from the path from $\ell$ to $a$ in $\mathcal{T}_K$) to its respective parent $b$, and updates its map $\mathcal{M}(b)\gets\mathcal{M}(b)\cup \mathcal{M}(a)$ (again preserving relative coordinates).
This process then iterates $\mathcal{L}(\psi)$ times from child to parent where then $\ell$ receives a complete map.
Agents repeatedly propagate messages long enough for the height of the tallest tree to be traversed twice (down then up) to preserve a relative coordinate scheme that is common to all agents in $K$.
If $\mathcal{M}(K)$ contains tasks that are unreachable by some agent, $\ell$ disregards their respective components, and if this results in a map without tasks, the cluster is dissolved via broadcast of messages that cause agents to reset their cluster associations and pointers.
We omit this description from Algorithm~\ref{alg:overview} for brevity.
The lemma below guarantees correctness and running time.

\begin{lemma}
\label{lemma:LMA}
Given 
$\mathcal{T}_K$, local map aggregation provides the leader $\ell$ of the cluster $K$ with a complete map $\mathcal{M}(K)=\nu(K)$ with all nodes labelled relative to $\rho(\ell)=(0,0)$ in $\mathcal{O}(\psi)$ 
time steps.
\end{lemma}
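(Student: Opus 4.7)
The plan is to prove the three claims of the statement separately: (i) the running time bound, (ii) the consistency of the relative coordinate scheme across $K$, and (iii) the completeness of the aggregated map $\mathcal{M}(\ell)=\nu(K)$.

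The running time claim follows almost immediately from Lemma~\ref{lemma:SOAC2}. The LMA procedure consists of two sequential loops (Lines~\ref{downloop} and~\ref{uploop}), each of $\mathcal{L}(\psi)$ synchronized iterations, and each iteration corresponds to one time step. Since $\mathcal{L}(\psi)=\frac{3}{2}(\psi-2)=\mathcal{O}(\psi)$, the total is $2\mathcal{L}(\psi)=\mathcal{O}(\psi)$ time steps.

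For the coordinate consistency, I would induct on the depth $d$ of an agent in $\mathcal{T}_K$. At $d=0$, the leader $\ell$ sets its own coordinates to $(0,0)$ when it initializes $\mathcal{M}(\ell)=\nu(\ell)$. For the inductive step, assume every agent at depth $d$ has, at the end of the down-loop, a map in which its own position is labelled relative to $\rho(\ell)=(0,0)$. Consider an agent $a$ at depth $d+1$ with parent $b$. Because $a$ and $b$ became parent-child during SOAC, they lie within each other's views, so $a$ knows the offset $\rho(a)-\rho(b)$ from its local view. When $a$ executes $\mathcal{M}(a)\gets\mathcal{M}(a)\cup\mathcal{M}(b)$ preserving $b$'s coordinates, $a$ can shift its own view $\nu(a)$ by this known offset to match $b$'s frame, which by induction is $\ell$'s frame. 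The down-loop of $\mathcal{L}(\psi)$ iterations is sufficient for this wave of updates to reach every leaf, since by Lemma~\ref{lemma:SOAC2} the height of $\mathcal{T}_K$ is at most $\mathcal{L}(\psi)$.

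For completeness, I would induct on $i=0,1,\ldots,\mathcal{L}(\psi)$ over the up-loop iterations, with the invariant that after $i$ iterations, for every agent $v$ whose subtree in $\mathcal{T}_K$ has height at most $i$, the map $\mathcal{M}(v)$ equals $\nu(T_v)$, where $T_v$ is the subtree rooted at $v$, and all coordinates are relative to $\rho(\ell)=(0,0)$. The base case $i=0$ is immediate: every leaf $v$ has $\mathcal{M}(v)=\nu(v)$ after the down-phase, with the correct coordinate frame by claim (ii). For the inductive step, at iteration $i+1$, each agent $v$ whose subtree has height $i+1$ has already received maps from all its children (whose subtrees have height at most $i$) satisfying the invariant, so the union $\mathcal{M}(v.parent)\gets\mathcal{M}(v)\cup\mathcal{M}(v.parent)$ performed by each child preserves both completeness and the common coordinate frame. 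Since the height of $\mathcal{T}_K$ is at most $\mathcal{L}(\psi)$, after $\mathcal{L}(\psi)$ iterations the invariant applies to $\ell$, giving $\mathcal{M}(\ell)=\nu(K)$.

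The main obstacle will be the coordinate-translation argument: I need to argue carefully that, even though agents only know local offsets to their tree neighbors, the inductive composition of these offsets produces a globally consistent frame centered at $\ell$. The synchronization guaranteed by $\mathcal{L}(\psi)$ iterations per loop, together with the parent-child adjacency established during SOAC, is exactly what lets this induction go through.
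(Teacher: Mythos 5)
Your proposal is correct and follows essentially the same route as the paper's proof: the $\mathcal{O}(\psi)$ bound comes from the two $\mathcal{L}(\psi)$-iteration loops via the tree-height bound of Lemma~\ref{lemma:SOAC2}, coordinate consistency comes from each union being anchored in the parent's (hence $\ell$'s) frame, and completeness comes from maps accumulating up the tree---you merely make explicit, via inductions on depth and on up-loop iterations, what the paper asserts more briefly. One small imprecision: after the down-phase a node's map also contains its ancestors' views, so your up-loop invariant should read $\mathcal{M}(v)\supseteq\nu(T_v)$ rather than equality (the final equality $\mathcal{M}(\ell)=\nu(K)$ still follows since maps only ever contain views of cluster members), but this does not affect the argument.
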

\begin{proof}
    As the size of any cluster is constant, and an agent $a$ in cluster $K$ will disregard any agent not in $K$, $\mathcal{M}(a)$ has constant size and is generated in constant time.
    By Lemma~\ref{lemma:SOAC2}, the maximum height of $\mathcal{T}_K$ is $\mathcal{L}(\psi)$, and so $\mathcal{O}(\mathcal{L}(\psi))$ is a sufficient time so that the leaves of $\mathcal{T}_K$ receive partial cluster maps from $\ell$.
    Likewise, the broadcast of map information from the farthest leaf to the root requires $\mathcal{O}(\mathcal{L}(\psi))$ 
    time steps.
    As each message pass from an agent $a$ to an agent $b$ produces a union of $\mathcal{M}(a)$ and $\mathcal{M}(b)$, after $\mathcal{O}(\mathcal{L}(\psi))=\mathcal{O}(\psi)$ 
    time steps, $\ell$ possesses a complete map of $K$.
    Note that the number of messages passed from a parent to its children is constant since $c$ is constant.
    By the labelling scheme given in LMA, the complete map obtained by leader $\ell$ is labelled with coordinates relative to $\rho(\ell)=(0,0)$.
    This follows since the map originates from $\ell$, where it is labelled with relative coordinates to $\rho(\ell)=(0,0)$, and every map union operation appends nodes to a map that contains the original map of $\ell$, so the relative labelling scheme is preserved in the complete map. 
\end{proof}

\paragraph{\textbf{Team-Restricted Multiagent Rollout and Execute Movement} 
 (Algorithm~\ref{alg:overview}, Lines~\ref{alg:TMAR-start}-\ref{alg:EM-end})}

TMAR describes how to apply multiagent rollout from~\cite{bertsekasMAR} to an independent cluster $K$ while respecting all of the local sensing and communication restrictions of our model.
After the sequential execution of SOAC followed by LMA, each cluster $K$ is structured as a spanning tree $\mathcal{T}_K$ rooted at the leader agent $\ell$, who possesses a complete map $\mathcal{M}(K)$.
Agent $\ell$ executes multiagent rollout on $\mathcal{M}(K)$ with a greedy nearest-neighbor base policy (i.e. each agent moves toward its respective nearest task) to obtain control sequences for each agent in $K$, which terminates in constant time since the size of $\mathcal{M}(K)$ is bounded by a constant due to the constant height and branching factor of any agent tree.
Agent $\ell$ then initiates a broadcast of these sequences (via parent-to-child message passing in $\mathcal{T}_K$), each of which is associated with a specific agent ID. 
After enough iterations have passed for each agent $a$ to receive its corresponding control sequence, $a$ will simply execute its control sequence via movement (EM).
To maintain synchronization between all clusters, $a$ will wait an additional number of time steps until it reaches a maximum control sequence length $\lambda(\psi)=\mathcal{O}([\mathcal{L}(\psi)]^4)$---a function on $\psi$ that returns a bound on the maximum route length for a greedy nearest-neighbor heuristic applied to $\mathcal{M}(K)$.
For any agent $a$ not in a cluster, instead of following a computed control sequence, $a$ executes a (non-empty) sequence of controls generated uniformly at random of length at most $\lambda(\psi)$ and waits as soon as it sees a task, thereby exploring the network.

The theorem below shows that, in constant time, a round of DMAR separates a subset of the agents into clusters, for each of which control sequences are generated from multiagent rollout locally and executed.
Agents that are not members of clusters perform random walks of at most constant length in each run of EM.
\begin{theorem}
\label{thm:Cluster-MAHR}
During a round $r$ of DMAR, the agent set is partitioned into members of clusters and non-members. 
Within $\mathcal{O}(\psi)$ time steps, each agent $x$ that is in a cluster $K$ executes a control sequence $\mathcal{R}^x$ that is generated from multiagent rollout with a greedy base policy applied to $\mathcal{M}(K)$, and executes $\mathcal{R}^x$ within $\mathcal{O}(\psi^{4})$ 
time steps.
If $x$ is not in a cluster, $x$ performs a random walk of length at most $\lambda(\psi)=\mathcal{O}(\psi^{4})$.
Lastly, round $r$ terminates in constant time.
\end{theorem}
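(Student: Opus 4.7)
The plan is to compose the guarantees of Lemmas \ref{lemma:SOAC2} and \ref{lemma:LMA} with a direct analysis of the TMAR/EM phase, then sum the per-phase time bounds. Since $\psi$ is a fixed input parameter common to all agents (independent of $N$), any bound of the form $\mathcal{O}(\psi^{O(1)})$ counts as $\mathcal{O}(1)$ with respect to the network size, which is what must be shown at the end.

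First, I would invoke Lemma \ref{lemma:SOAC2} to establish the partition claim: after SOAC, every agent either belongs to a unique cluster $K$ whose members form a tree $\mathcal{T}_K$ of height $\mathcal{L}(\psi)=\mathcal{O}(\psi)$ and size $c^{\mathcal{O}(\psi)}$, or belongs to no cluster, and this phase completes within $\mathcal{O}(\psi\log\psi)$ time steps. Then Lemma \ref{lemma:LMA} gives that LMA terminates in $\mathcal{O}(\psi)$ time steps and delivers to each leader $\ell$ the complete cluster map $\mathcal{M}(K)=\nu(K)$ in a coordinate system centered at $\ell$; since $|\mathcal{T}_K|$ depends only on $\psi$ and each view has $\mathcal{O}(k^2)$ nodes for fixed sensing radius $k$, the size of $\mathcal{M}(K)$ is bounded by a function of $\psi$ and $k$ alone.

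For TMAR, the leader runs multiagent rollout with the greedy nearest-neighbor base policy on $\mathcal{M}(K)$. Because $|\mathcal{M}(K)|$ does not depend on $N$, this local computation terminates in time that is constant in $N$, and the resulting per-agent control sequences are then broadcast down $\mathcal{T}_K$ in $\mathcal{O}(\mathcal{L}(\psi))=\mathcal{O}(\psi)$ time steps, which is exactly the budget of the $\mathcal{L}(\psi)$-iteration loop scheduled in Algorithm \ref{alg:overview}. For EM, each cluster-member agent $x$ then executes its assigned sequence $\mathcal{R}^x$; its length is bounded by $\lambda(\psi)=\mathcal{O}(\psi^4)$ because $\mathcal{M}(K)$ has spatial diameter $\mathcal{O}(\psi)$ (combining the tree-height bound $\mathcal{L}(\psi)$ from Lemma \ref{lemma:SOAC2} with the fact that any two adjacent tree agents lie within each other's views) and thus $\mathcal{O}(\psi^2)$ reachable task nodes, so that a greedy nearest-neighbor traversal has total length at most (tasks)$\times$(diameter), comfortably absorbed by the $\mathcal{O}(\psi^4)$ bound. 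Non-cluster agents perform random walks of length at most $\lambda(\psi)$ by direct construction in Algorithm \ref{alg:overview}.

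Summing the per-phase times yields $\mathcal{O}(\psi\log\psi)+\mathcal{O}(\psi)+\mathcal{O}(1)+\mathcal{O}(\psi)+\mathcal{O}(\psi^4)=\mathcal{O}(\psi^4)$, which is $\mathcal{O}(1)$ in $N$ and establishes the closing ``constant time'' statement. The step I expect to be the main obstacle is justifying the $\mathcal{O}(\psi^4)$ bound on $\lambda(\psi)$: one must translate the tree-height bound of Lemma \ref{lemma:SOAC2} into a spatial-diameter bound on $\mathcal{M}(K)$ (requiring the view-overlap property enforced during SOAC) and then argue that a greedy nearest-neighbor base policy on a region of that diameter cannot produce per-agent trajectories exceeding this polynomial budget. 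All remaining parts of the theorem reduce to routine bookkeeping over the loop lengths scheduled in Algorithm \ref{alg:overview} together with the two preceding lemmas.
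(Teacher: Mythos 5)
Your decomposition matches the paper's proof almost exactly: invoke Lemma~\ref{lemma:SOAC2} and Lemma~\ref{lemma:LMA} for the partition, tree structure, and constant-size map $\mathcal{M}(K)$; conclude that the rollout computation and the broadcast down $\mathcal{T}_K$ take time depending only on $\psi$ (and the constant $k$); bound the executed trajectories by $\lambda(\psi)=\mathcal{O}(\psi^4)$ via (number of tasks)$\times$(diameter); and sum the phase budgets, noting non-cluster agents walk at most $\lambda(\psi)$ steps by construction.

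There is, however, one step you elide that the paper makes explicit and that your argument needs: the sequence $\mathcal{R}^x$ that agent $x$ executes is produced by \emph{multiagent rollout}, not by the greedy nearest-neighbor heuristic, so bounding the greedy traversal on $\mathcal{M}(K)$ does not by itself bound $|\mathcal{R}^x|$. The paper closes this gap by invoking the cost-improvement property of multiagent rollout with a sequentially consistent greedy base policy~\cite{bertsekasMAR}: rollout can perform no worse than the base policy, hence cannot produce a trajectory exceeding $\lambda(\psi)$, which is defined precisely as a bound on the greedy route length over $\mathcal{M}(K)$. You should add that invocation. A second, minor point: your intermediate claim that the greedy traversal is $\mathcal{O}(\psi^3)$ relies on the \emph{spatial} diameter $\mathcal{O}(\psi)$, but with obstacles the graph distance inside $\mathcal{M}(K)$ between consecutive tasks can be as large as the map's area, $\mathcal{O}(\psi^2)$; the paper therefore uses the coarser bound (tasks)$\times$(area-bounded diameter) $=\mathcal{O}(\psi^2)\cdot\mathcal{O}(\psi^2)=\mathcal{O}(\psi^4)$. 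Since the theorem only asserts $\mathcal{O}(\psi^4)$, this does not affect your conclusion, but the $\mathcal{O}(\psi^3)$ figure as stated is not justified.
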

\begin{proof}
As $\mathcal{O}(\psi)$ additional time is required for message broadcast in the event $K$ is dissolved at the end of LMA (due to the tree height), by Lemmas~\ref{lemma:SOAC2} and~\ref{lemma:LMA}, the agent set is partitioned into agents that are members of clusters and those who are not in constant time.
 Those that are in a cluster $K$ are associated with an agent tree $\mathcal{T}_K$, the root of which possesses a map $\mathcal{M}(K)$ of $K$.
 By Lemma~\ref{lemma:SOAC2}(i) and since $k$ is constant, the radius of $\mathcal{M}(K)$ is bounded by $\mathcal{O}(\psi)$, and hence the area of $\mathcal{M}(K)$ is bounded by $\mathcal{O}(\psi^{2})=\mathcal{O}(1)$.
 The number of agents in $K$ is bounded by a constant $(c^{\mathcal{O}(\mathcal{L}(\psi))})$.
 As the area of $\mathcal{M}(K)$ is of constant size, it can contain at most a constant number of tasks (i.e. at most one per node).
 Hence, any multiagent rollout computation on a map $\mathcal{M}(K)$ will terminate in constant time.
 Consequently, by Lemma~\ref{lemma:SOAC2}(i) and according to TMAR, each agent in a cluster $K$ obtains a control sequence computed from multiagent rollout applied to $\mathcal{M}(K)$ in a constant number of 
 time steps and moves accordingly.

In the worst case, there may be a task at every node, and a route may visit every task with a cost incurred proportional to the diameter of the cluster.
This implies the longest route for a cluster $K$ is bounded by $\lambda(\psi)=\mathcal{O}(\psi^{4})=\mathcal{O}(1)$ since the diameter of the cluster is bounded by its area.
Since the base policy is greedy, multiagent rollout can perform no worse than it (see cost-improvement properties~\cite{bertsekasMAR}), and hence can never produce a trajectory longer than $\lambda(\psi)$.
As each agent spends a constant number of 
time steps in TMAR, and exactly $\lambda(\psi)$ 
time steps in EM, these phases terminate constant time.
If an agent $x$ is not in a cluster, then $x$ executes a non-empty random walk of length at most $\lambda(\psi)=\mathcal{O}(1)$ according to TMAR, hence, $r$ terminates in constant time.
\end{proof}

Until now, we have considered DMAR only on a per-round basis, but it remains to show that the number of rounds before an instance of UMVRP-L is solved is also bounded.
The following analysis demonstrates this and will rely on Theorem~\ref{thm:Cluster-MAHR} and a known result about random walks.
The \textit{cover time} of a random walk on a connected, undirected graph $G(V,E)$ (denoted $\mathcal{C}(G)$) represents the worst-case expected time to visit every node of $G$ starting at any initial node.
It is known that $\mathcal{C}(G)\le 2|E|(|V|-1)$~\cite{motwani}.
We now show that the number of 
time steps before DMAR solves any solvable instance of UMVRP-L is bounded in expectation.

\begin{theorem}\label{thm:DMARexpected}
    For a solvable instance $G(V,E),O,S,\tau$ on $N$ nodes, DMAR completes all tasks in $\mathcal{O}(N^2)$ 
    time steps in expectation.
\end{theorem}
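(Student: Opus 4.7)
The plan is to reduce the theorem to a bound on the expected number of \emph{rounds} of DMAR. Theorem~\ref{thm:Cluster-MAHR} already guarantees that each round terminates in $\mathcal{O}(1)$ time steps for constant $\psi$, so it suffices to prove that the expected number of rounds is $\mathcal{O}(N^2)$.

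First, I would classify a round as \emph{productive} if at least one cluster forms and is not dissolved during LMA, and as \emph{exploratory} otherwise. In an exploratory round every agent either sees no task or belongs to a cluster that dissolves, and the surviving agents execute the random-walk branch of EM, contributing $\Theta(1)$ steps each. In a productive round, every surviving cluster $K$ begins with its leader holding $\mathcal{M}(K)$ containing at least one reachable task; since multiagent rollout with the greedy nearest-neighbor base policy performs at least as well as the base policy itself (by the cost-improvement property referenced in Section~\ref{sec:model}), the control sequence executed by $K$ visits at least one task within the round. Hence the total number of productive rounds is at most $|\tau|=\mathcal{O}(N)$.

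To bound the number of exploratory rounds, I would invoke the cover-time inequality $\mathcal{C}(G)\le 2|E|(|V|-1)$ cited in the paper. For a rectangular grid with $|V|\le N$ and $|E|=\mathcal{O}(N)$, this gives $\mathcal{C}(G)=\mathcal{O}(N^2)$. By Markov's inequality, a random walk visits every node, and hence sees every remaining task, within $2\mathcal{C}(G)=\mathcal{O}(N^2)$ steps with probability at least $1/2$; iterating this argument bounds the expected cumulative number of random-walk steps needed to reveal every remaining task by $\mathcal{O}(N^2)$. Since each exploratory round contributes $\Theta(1)$ random-walk steps per exploring agent, the expected number of exploratory rounds is also $\mathcal{O}(N^2)$.

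The main obstacle will be justifying that these ``interrupted'' random walks, broken up by the deterministic cluster movements of productive rounds, still obey the cover-time bound. I would address this by observing that each productive round displaces a given agent by at most $\lambda(\psi)=\mathcal{O}(1)$ edges and that there are only $\mathcal{O}(N)$ such interruptions in total, so after each cluster dissolves the walk resumes at some node of $G$ and the worst-case cover-time bound continues to apply from the new starting position. Summing the $\mathcal{O}(N^2)$ exploratory rounds with the $\mathcal{O}(N)$ productive rounds, each of $\mathcal{O}(1)$ duration by Theorem~\ref{thm:Cluster-MAHR}, yields the claimed $\mathcal{O}(N^2)$ expected time bound.
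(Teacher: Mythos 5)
Your proposal follows essentially the same route as the paper's own proof: each round costs $\mathcal{O}(1)$ time steps by Theorem~\ref{thm:Cluster-MAHR}, rounds in which clusters survive are charged to the at most $\mathcal{O}(N)$ tasks they complete, and the remaining exploratory effort is bounded via the random-walk cover time $\mathcal{C}(G)\le 2|E|(|V|-1)=\mathcal{O}(N^2)$ with $|E|=\mathcal{O}(N)$ on the grid. You are in fact more explicit than the paper, which does not even mention the interrupted-walk/displacement subtlety you flag, so your argument is correct to at least the standard of the published proof.
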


\begin{proof}
During a round of DMAR, if there are clusters in the TMAR phase, then some task will be completed in constant time by Theorem~\ref{thm:Cluster-MAHR} since the multiagent rollout algorithm is guaranteed to complete all tasks with our chosen heuristic~\cite{bertsekasMAR}.
Otherwise, each agent follows a random walk of length at least one. 
In the worst case, an agent may need to visit every node in the network to locate a task via random exploration. 
Consequently, by Theorem~\ref{thm:Cluster-MAHR}, the cover time bound, and since $|E|\le4N$, we conclude that the total number of 
time steps required to solve the instance is at most directly proportional to $8 N(N-1)=\mathcal{O}(N^2)$ in expectation.
\end{proof}

Although we have bounded the number of 
time steps required for DMAR to solve an instance of UMVRP-L in expectation, it is critical for any motion planner to show that it is \textit{probabilistically complete}, i.e. for any solvable instance $I$ of UMVRP-L, the probability that DMAR solves $I$ tends to unity as the number of rounds tends to infinity~\cite{probabilisticCompleteness}.
The following theorem demonstrates this.

\begin{theorem}\label{thm:probcomp}
    DMAR is a probabilisitically complete planner.
\end{theorem}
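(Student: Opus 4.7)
The plan is to deduce probabilistic completeness as a direct corollary of Theorem~\ref{thm:DMARexpected}, which already provides a finite bound on the expected number of time steps to solve any solvable instance. First I would let $T$ denote the random number of time steps until DMAR completes all tasks of a given solvable instance $I$, and observe that Theorem~\ref{thm:DMARexpected} gives $E[T] = \mathcal{O}(N^2) < \infty$. Then, since $T$ is a nonnegative random variable, Markov's inequality yields $\Pr[T > t] \le E[T]/t$, which tends to $0$ as $t \to \infty$. Hence $\Pr[T \le t] \to 1$, and dividing by the constant per-round time guaranteed by Theorem~\ref{thm:Cluster-MAHR} translates this into the ``number of rounds'' formulation used in the stated definition of probabilistic completeness.

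If one wanted to avoid appealing to Theorem~\ref{thm:DMARexpected} and instead argue from scratch, I would induct on the number of uncompleted tasks. In each round one of two cases would apply. In the first case, some agent sees a task, a cluster forms, and by Theorem~\ref{thm:Cluster-MAHR} the cluster leader's multiagent rollout computation with the greedy nearest-neighbor base policy deterministically completes at least one task within that round. In the second case, no agent sees any task, so every agent performs a nonempty random walk of length at most $\lambda(\psi)$. Concatenating the successive random-walk segments across consecutive second-case rounds produces an honest random walk on the connected component of $G \setminus O$ containing that agent, and the cover-time bound $\mathcal{C}(G) \le 2|E|(|V|-1)$ then implies that any fixed reachable node is visited in finite time with probability one. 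Solvability of $I$ guarantees that every surviving task is reachable from some agent, so almost surely the system eventually reenters the first case and another task is completed, closing the induction.

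The main obstacle will not actually be difficult here, since the heavy lifting has already been done in the preceding theorems. The only subtle point to verify carefully in the constructive route is that the random-walk segments produced across consecutive second-case rounds compose into a genuine Markov random walk on the accessible subgraph, independent of the synchronization and clustering overhead; once this is confirmed, the classical cover-time argument applies verbatim. In the Markov-inequality route, one need only check that the expectation is in fact finite rather than merely bounded in asymptotic order notation, which is immediate since the $\mathcal{O}(N^2)$ bound in Theorem~\ref{thm:DMARexpected} is a concrete finite number for any fixed instance.
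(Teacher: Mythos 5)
Your primary argument is exactly the paper's proof: take the finite expected completion time from Theorem~\ref{thm:DMARexpected} and apply Markov's inequality to conclude that $\Pr[T \geq t] \to 0$, hence the success probability tends to one. The alternative constructive route you sketch is unnecessary extra machinery, but the main route is correct and matches the paper's approach.
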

\begin{proof}
We argue from the squeeze principle.
Let random variable $T_{DMAR}$ be the length of an execution of DMAR in 
time steps given an arbitrary solvable instance of UMVRP-L.
As $\mathbb{E}[T_{DMAR}]$ is finite and positive by Theorem~\ref{thm:DMARexpected}, the sequence $(\mathbb{E}[T_{DMAR}]/t)_{t=1}^{\infty}$ converges to zero, as does the sequence $(0)_{t=1}^{\infty}$.
We use these facts to show the sequence $(\Pr{[T_{DMAR}\geq t]})_{t=1}^\infty$ also converges to zero.
Let $\epsilon>0$ be an arbitrary positive real number.
It suffices to show that there exits some $t'\in \mathbb{Z}^+$ such that if $t\geq t'$ then $\Pr{[T_{DMAR}\geq t]}<\epsilon$.
Choose $t'$ such that $\mathbb{E}[T_{DMAR}]/t<\epsilon$ for all $t\geq t'$.
Such a $t'$ must exist since $(\mathbb{E}[T_{DMAR}]/t)_{t=1}^{\infty}$ is decreasing and tends to zero.
By the Markov inequality, $
    0\le \Pr{[T_{DMAR}\geq t]}\le \mathbb{E}[T_{DMAR}]/t<\epsilon,\forall t\in \mathbb{Z}_{\geq t'}
$, and so $(\Pr{[T_{DMAR}\geq t}])_{t=1}^\infty$ also tends to zero.
Hence, under an execution of DMAR the probability that a reachable non-completed task persists tends to zero as the number of 
time steps tends to infinity, and so DMAR is probabilistically complete.
\end{proof}

\section{Experimental results}
\label{sec:experiments}
\label{sec:simulations-d}
We consider eight classes of instances of UMVRP-L; namely $10\times10,\ldots,80\times 80$ grid graphs.
For each class, we uniformly sample ten grids with 20\% of the nodes randomly designated as obstacles.
For each $\sqrt{N}\times \sqrt{N}$ grid, we consider a range of $k$-values (recall sensing radius), and three agent-to-task ratios (1:2,1:1, 2:1).
Agents and tasks are distributed uniformly at random over each instance for each ratio, and the number of agents is always $\sqrt{N}$.
Moreover, we disregard any topology that contains geographically isolated tasks, i.e. where the expected length of a random walk to visit any task is $\omega(N)$.
For all simulations, we fixed $\psi=8$, as preliminary simulations suggested this yields the best results.
We run ten independent runs per instance to account for randomness.
For each grid size and $k$-value combination, we report movement costs, wall-clock running times and number of clusters formed averaged over ten runs of ten instances for each of the three agent-to-task ratios.

For each of these combinations we run DMAR and a corresponding base policy (BP) called the \emph{greedy-exploration policy}.
Note that the algorithm that generates this policy differs from DMAR only in Line~\ref{alg:rollout} of Algorithm~\ref{alg:overview}, where instead of multiagent rollout, a greedy nearest-neighbor heuristic is used to compute a policy for each agent in $K$ 
 over $\mathcal{M}(K)$.
That is, the base policy still involves random exploration and mandates SOAC and LMA, but does not induce coordination.
The combinations described above result in approximately 50,000 individual simulations over 5,000 instances.
In this work we present results from $40\times40, 60\times 60$ and $80\times 80$ grids, but comprehensive results can be found in Appendix-C~\cite{weber2023distributed}.
Note that these experiments simulate the constraints of a distributed environment, and are not themselves decentralized.
Simulator source code can be found in Appendix-F~\cite{weber2023distributed}.

We observe from Figure~\ref{fig:results} (rows 1-3, left) that for each grid on $N\in \{40^2,60^2, 80^2\}$ nodes there exists a critical radius $k^*(N)$ such that for all $k\geq k^*(N)$, rollout outperforms the greedy-exploration base policy.
As $k$ decreases, we see that the performance of rollout degrades gracefully; and as $k$ increases, the total average running time seems to increase exponentially to account for increased online planning, illustrating the trade-off between scalability and solution quality.
From Figure~\ref{fig:results} (rows 1-3) we see that an effective range of radii exists that contains a special radius for which there are multiple clusters generated and the relative cost improvement is a factor of approximately two. 
At this special radius we achieve an average wall-clock running time that is a small fraction of the average time corresponding to the largest $k$ values we considered.
Note that the average running time for DMAR on an instance at any $k$ value is no larger than the running time of the centralized multiagent rollout algorithm, hence the observed improvement in running time over the largest $k$ values represents a lower bound when comparing DMAR to centralized MVRP.
For $k$ values considered beyond the effective range, we only observe a further constant factor relative cost improvement (approximately 3), even when the number of clusters generated approaches one (as in centralized MVRP), and hence only a small benefit is gained at the expense of significantly higher running times. 
In Figure~\ref{fig:results} (bottom), we plot an approximation of $k^*(N)$ as obtained from our samples, and juxtapose several slow-growing functions.
We see that $k^*(N)$ is closely approximated by $\log_2^*(N)$.
Consequently, we mark our effective range bounds in Figure~\ref{fig:results} (rows 1-3) (indicated by the shaded box) as $2\log^*_2(N)$ to $3\log^*_2(N)$ (i.e. 8-12).

\begin{figure}
     \centering
    \begin{subfigure}[b]{0.95\textwidth/4}
         \centering
         \includegraphics[width=\textwidth]{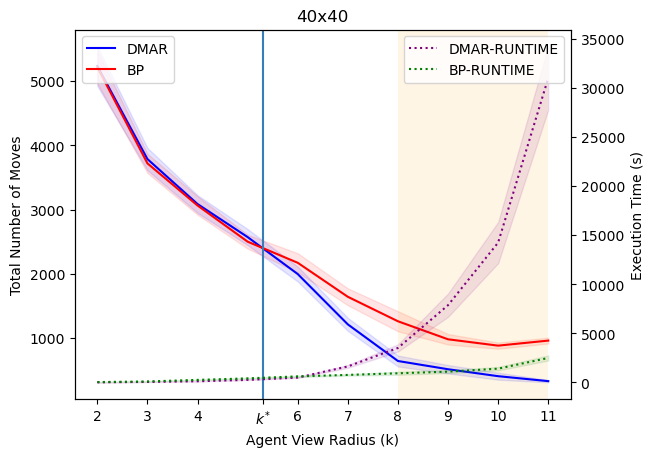}
         
     \end{subfigure}
     \hspace{-2.3mm}
     \begin{subfigure}[b]{0.9\textwidth/4}
         \centering
         \includegraphics[scale=0.255]{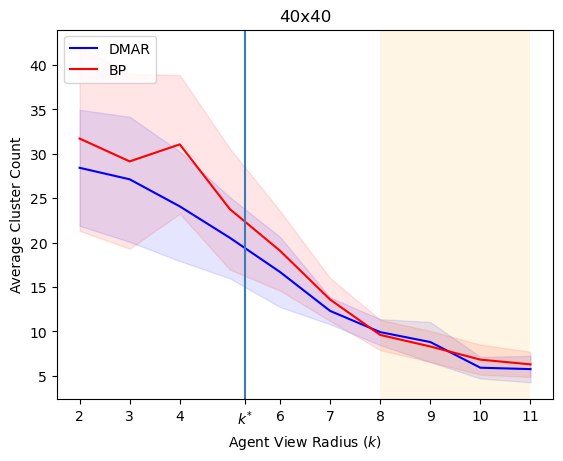}
         
     \end{subfigure}
     \hspace{-2.3mm}
     \begin{subfigure}[b]{0.95\textwidth/4}
         \centering
         \includegraphics[width=\textwidth]{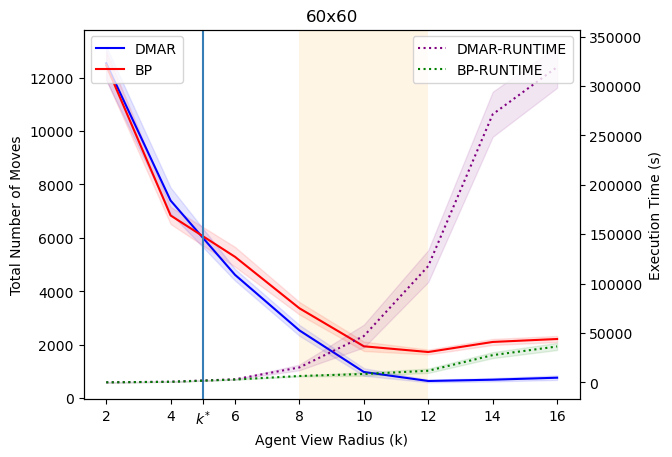}
       
     \end{subfigure}
     \hspace{-2.3mm}
     \begin{subfigure}[b]{0.9\textwidth/4}
         \centering
         \includegraphics[scale=0.255]{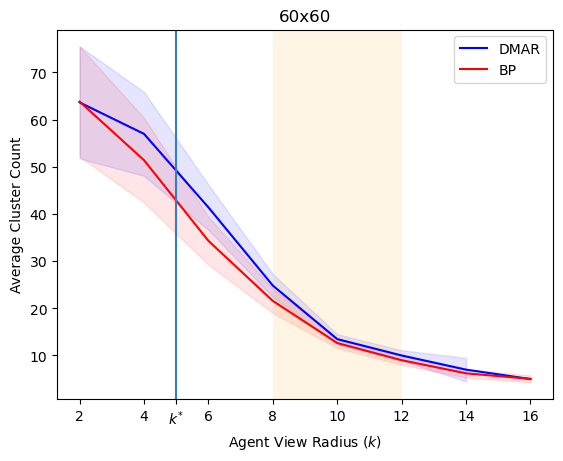}
         
     \end{subfigure}
     \hspace{-2.3mm}
    \begin{subfigure}[b]{0.95\textwidth/4}
         \centering
         \includegraphics[width=\textwidth]{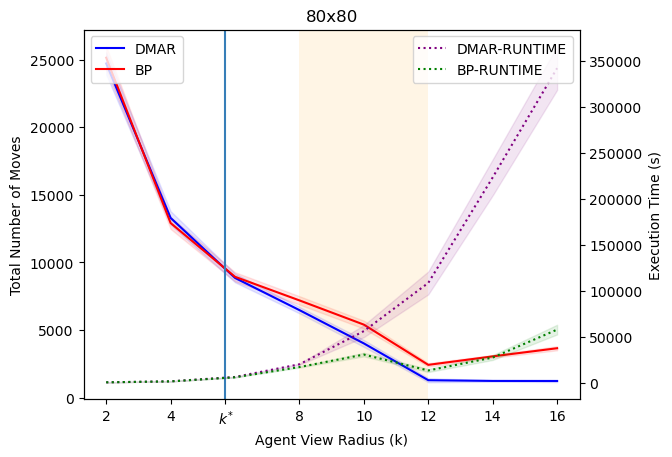}
      
     \end{subfigure}
    \hspace{-2.3mm}
    \begin{subfigure}[b]{0.9\textwidth/4}
         \centering
         \includegraphics[scale=0.255]{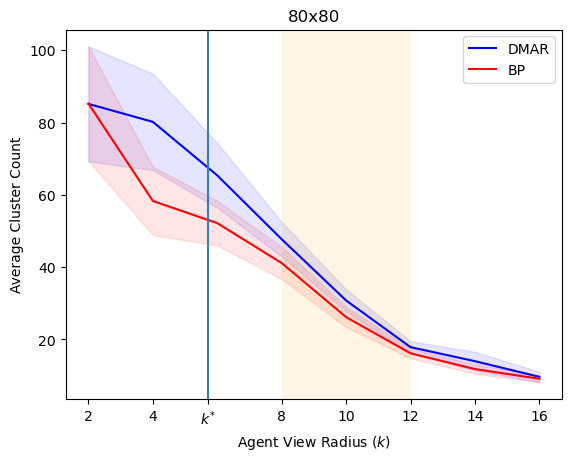}
         
     \end{subfigure}
     \begin{subfigure}[b]{0.9\textwidth/4}
        \includegraphics[scale=0.255]{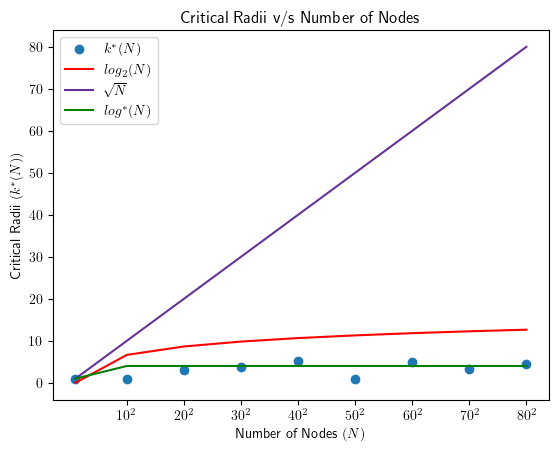}
       
        \label{fig:critical_ratios}
     \end{subfigure}
        \caption{
        (Rows 1-3, left) Left vertical axes show the average cost of greedy-exploration base policy vs average cost of DMAR. Right vertical axes show average running time in seconds for DMAR and base policy. Critical radii are marked as $k^*$; shaded orange boxes show effective ranges. 
        (Rows 1-3, right) Show average number of clusters from base policy vs those from DMAR. 
        95\% confidence intervals are shown by shaded regions around respective means. (Bottom) Sampled critical radii function $k^*(N)$.
       }
        \label{fig:results}
\end{figure}

\paragraph{\textbf{Physics-based Simulations and Robotics Experiments}}
To address the question of whether DMAR is applicable to real-world scenarios, we have created a proof-of-concept implementation that adapts DMAR to continuous space using the Robotarium platform~\cite{robotarium}.
These simulations capture the dynamics of differential drive-based robots; that is, each robot's movement is generated by actuation of motors, the dynamics of which are described by a set of ordinary differential equations.
In this adaptation, planning by each robot occurs in discrete space (9$\times$9 grids), however, once a discrete control sequence is computed, each control is then mapped to a continuous space velocity vector, after which the new robot position is realized by the actuators.
We took measures to avoid collisions and simulated the distributed constraints of our model on a centralized server (see Appendix-D~\cite{weber2023distributed}).

Due to the size constraints on the arena, we limit our simulations to 30 randomly generated environments of size \qty{2}{\meter}$\times$\qty{2}{\meter}, each with a varying number of tasks and 7 agents. 
For each instance, we ran 10 executions and obtained average costs.
In addition, we also varied the sensing radius $k$ as $20,40,60,80$ cm and $\infty$.
For all simulations, we fixed $\psi=4$.
The results of these experiments are presented in Figure~\ref{fig:results-phys} (left).
Here DMAR outperforms the greedy-exploration policy regardless of $k$, although its performance degrades as $k$ decreases.
We see that as the radius increases, the number of clusters decreases.
We also ran experiments
on physical robots using the Robotarium platform on many instances (Figure~\ref{fig:results-phys} (right)).

\begin{figure}
     \centering
     \begin{subfigure}[b]{0.95\textwidth/4}
         \centering
         \includegraphics[width=\textwidth]{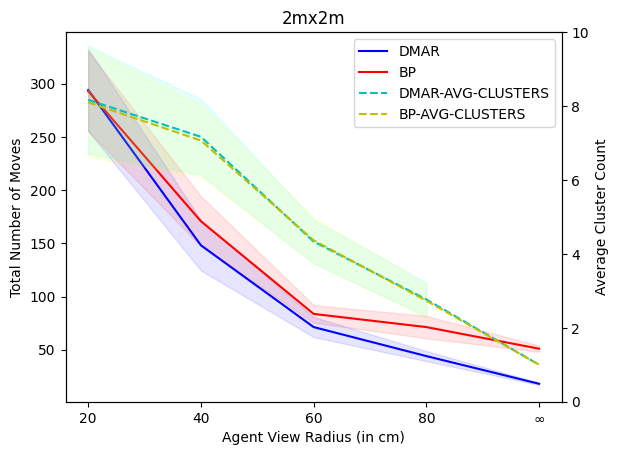}
   
         \label{fig:robotarium}
     \end{subfigure}
     \hspace{-1mm}
     \begin{subfigure}[b]{0.9\textwidth/4}
         \centering
         \includegraphics[scale=0.08]{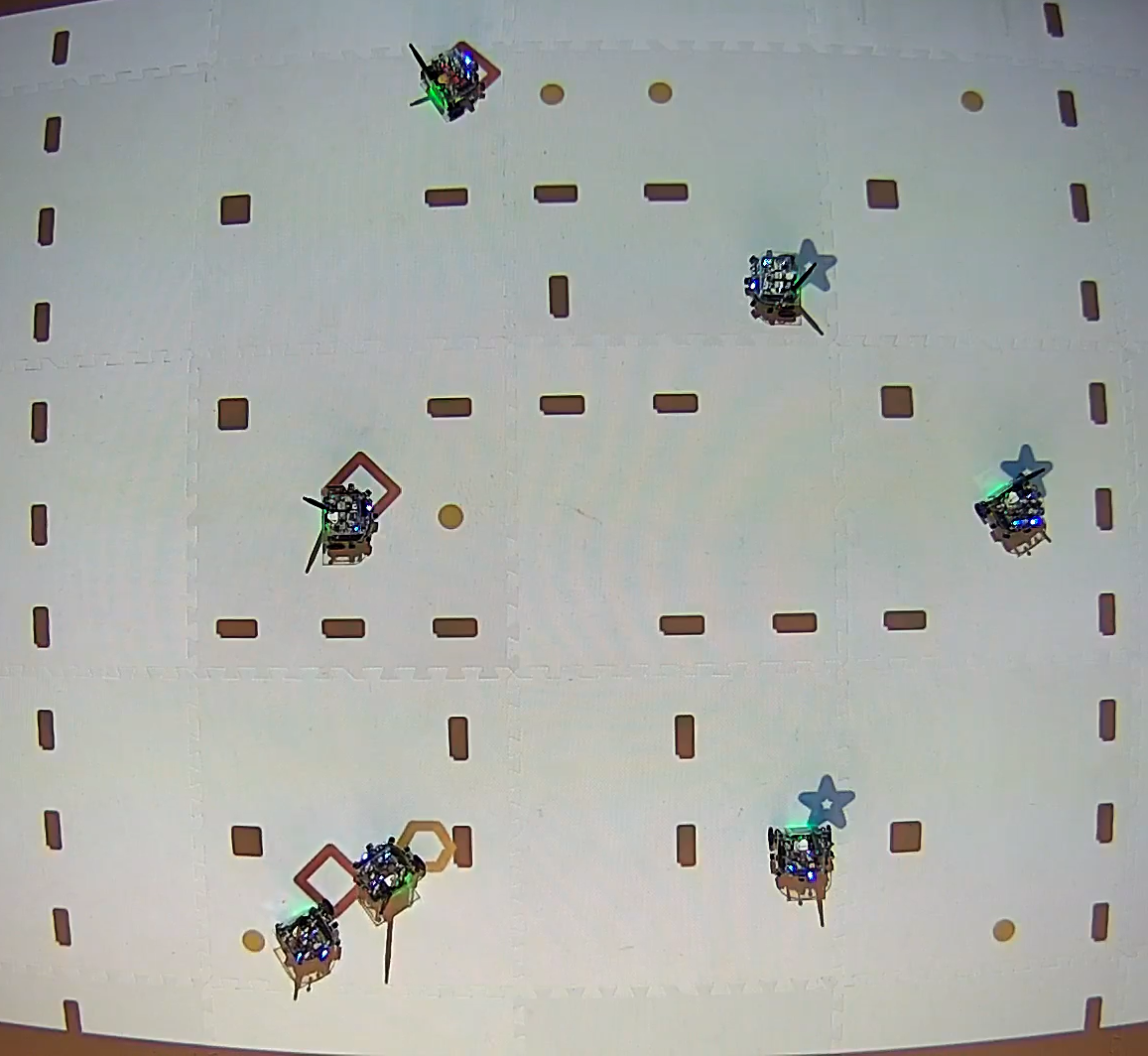}
         
         \label{fig:physical_experiments}
     \end{subfigure}
        \caption{
        (Left) Physics-based simulation on \qty{2}{\meter}$\times$\qty{2}{\meter} environments. Left vertical axis shows average solution cost, right axis shows average number of clusters. 95\% confidence intervals are given by shaded regions.
        (Right) DMAR execution in Robotarium. Small dots are tasks, boxes are obstacles. Star, diamond, hexagon indicate cluster membership. }
        \label{fig:results-phys}
\end{figure}

\section{Discussion}
\label{sec:conclusions}
We have presented a distributed computing approach for applying rollout to a generalization of the multivehicle routing problem where agents operate with limited local sensing capabilities and possess no a priori knowledge of the network.
This study allowed us to examine the role of communication in rollout, and the effect that limiting it has on performance.
Our approach produces quality solutions already for small
local sensing radii.
We have shown that there exists an (effectively constant) critical sensing radius beyond which rollout outperforms a greedy strategy for all larger radii.
Moreover, there exists a range of effective radii where the relative cost improvement over the base policy is approximately a factor of two, but communication is still significantly limited.
In this effective range we also observed radii where the running time of DMAR was exponentially smaller with only a constant factor performance detriment as compared to larger radii.
This implies the superior scalability of our approach, and illustrates its trade-off with performance.
We conclude that the user may choose a radius between 8 and 12 and expect substantial relative cost improvement and fast execution without knowledge of the network size, and that our approach is conceptually adaptable to continuous space and robust to sensor noise, augmenting its real-world applicability.

Regarding the limitations of this work, real-world distributed systems rarely behave synchronously, which we do not address here. Moreover, arbitrary physical environments may be noisy or change over time.
As such, future work includes adapting DMAR to stochastic and dynamic environments, and to weaker distributed models. 
Nevertheless, our approach may apply more broadly to other such decentralized multiagent problems, as vehicle routing is a fundamental primitive of many multiagent algorithms, e.g.~\cite{uav,exploration,DREXL2013275,platoon1,platoon2}.
Lastly, our target applications (recall minefield disarmament and post-disaster search and rescue) suggest that DMAR may produce a considerable positive societal impact.




\begin{acks}
    Special thanks to Ted Pavlic for guidance, and to Anya Chaturvedi and Joseph Briones for technical and organizational support. 
    Research partially supported under
NSF CCF-1637393 and CCF-1733680 and DoD-ARO MURI
No.W911NF-19-1-0233 awards.
\end{acks}



\bibliographystyle{ACM-Reference-Format} 
\bibliography{refs}


\newpage
\appendix

\section{Alternative Views}
For our experiments, we chose the view for an agent $a$ to be the subgraph of $G$ induced by all nodes reachable by at most $k$ hops from the position of $a$, $\rho(a)$, on the grid---we call this view (i).
However, other definitions of a view may make more sense in different contexts.
In this section we discuss two alternative views:
(ii) ($k$-hop-reduced) The view $\nu(a)$ of an agent $a$ is given by the subgraph of $G$ induced by all nodes reachable in at most $k$ hops from $\rho(a)$ on a graph $G'$ that corresponds to $G$ with all obstacle nodes removed;
(iii) (line of sight) The view $\nu(a)$ of an agent $a$ is given by the subgraph of $G$ induced by all nodes $v$ reachable in at most $k$ hops from $\rho(a)$ on a graph $G'$ that corresponds to $G$ with all obstacle nodes removed, and such that a line segment connecting the coordinates of $\rho(a)$ to the coordinates of $v$ can be drawn that does not intersect an obstacle.
Note that (i) is the least restrictive of the three proposed views. Thus, as our theoretical results from Section~\ref{sec:algorithms} reflect worst-case running time as a function of the largest cluster size, they apply to views (ii) and (iii) as well since these definitions can only make the largest cluster size smaller.
We also remark that the choice of view is application dependent.
For example, with definitions (i) and (ii), agents may see through obstacles, which may make sense depending on the application, whereas (iii) may be more suitable for a maze environment.

Experimentally, we reran our experiments from Section~\ref{sec:experiments} on the set of $40\times40$ and $60\times 60$ grids using definition (ii).
The results are summarized in Figure~\ref{ldfs}.
The more restrictive view definition yielded smaller clusters, and hence higher solution costs as compared with definition (i), although the shapes of the curves remain consistent with our results using definition (i). 
Moreover, the running time was overall lower.
We conclude that using more restrictive view definitions effectively shifts the cost curves up, while shifting the computational cost curves down.
Both are attributed to the fact that less information is available under the restricted view, the former resulting from a dampened cost benefit, and the latter due to there being less information on which to perform computations.
We did not run experiments using view definition (iii), but we predict that these curve shifts would be even more pronounced.

\begin{figure}
    \centering
    \begin{subfigure}[b]{0.98\textwidth/4}
    \includegraphics[width=\textwidth]{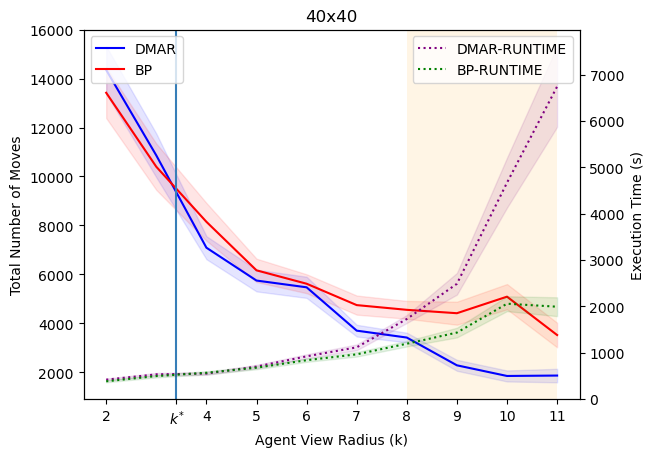}
    \caption{}
    \label{fig:ldfs-bp-dmar}
    \end{subfigure}
    \hfill
    \begin{subfigure}[b]{0.85\textwidth/4}
    \includegraphics[width=\textwidth]{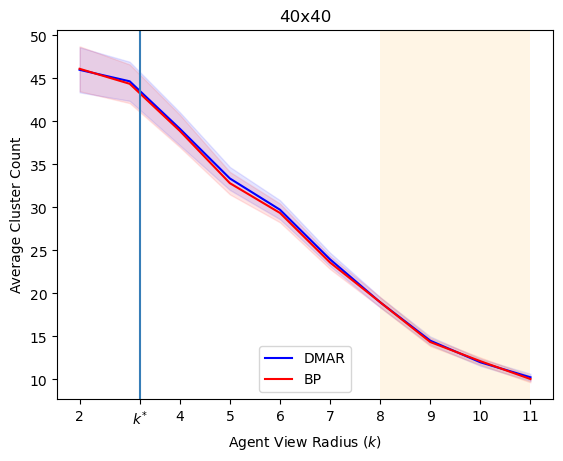}
    \caption{}
    \label{fig:ldfs-acc}
    \end{subfigure}
    \hfill
    \begin{subfigure}[b]{0.98\textwidth/4}
    \includegraphics[width=\textwidth]{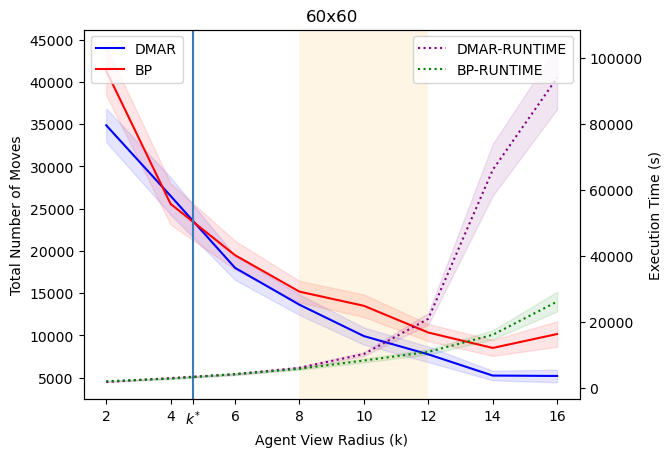}
    \caption{}
    \label{fig:ldfs-bp-dmar-60}
    \end{subfigure}
    \hfill
    \begin{subfigure}[b]{0.85\textwidth/4}
    \includegraphics[width=\textwidth]{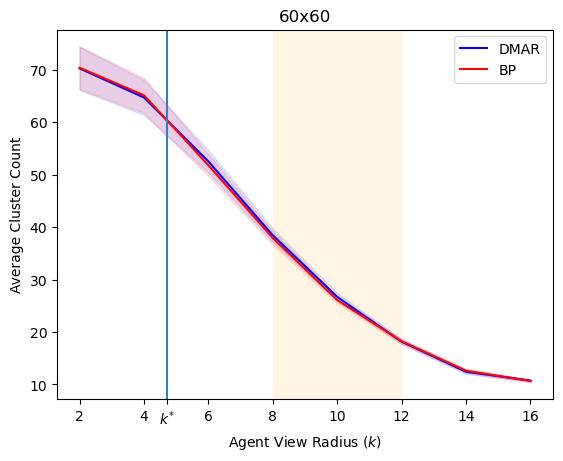}
    \caption{}
    \label{fig:ldfs-acc-60}
    \end{subfigure}
    \caption{Results for a $k$-hop-reduced view definition on $40\times40, 60\times60$ grids. (a,c) Left vertical axis shows average cost of greedy-exploration base policy (red) vs average cost of DMAR (blue). Right vertical axis shows average running time in seconds for DMAR and base policy. Critical radius is marked as $k^*$; shaded orange boxes show effective ranges. 
        (b,d) Show average number of clusters from base policy (red) vs those from DMAR (blue). 
        95\% confidence intervals shown by shaded regions around respective means.}
    \label{ldfs}
\end{figure}

\section{DMAR variant with enforced cost-improvement property}

In this section we present a modification of DMAR (and its corresponding base policy) that contains additional mechanisms that guarantee cost improvement---although at a cost. 
This is motivated by the fact that global solution quality is dependent on the sequence of clusters formed, and therefore cost improvement obtained from solving clusters using DMAR can be difficult to measure.

In our modified version of DMAR with a guaranteed relative cost-improvement property (DMAR-GCI), after all tasks in a cluster $K$ are completed, the agents of $K$ gather at a common position within the cluster view (we call this a \emph{depot}) regardless of whether rollout or the greedy-exploration base policy (denoted as BP-GCI) is executed.
More specifically, let $\mathcal{R}$ be the set of control sequences for $K$ computed by the leader agent $\ell$ during an execution of TMAR.
Also, denote $\rho_f(\mathcal{R}^a)$ as the final position of an agent $a$ as obtained from executing the control sequence $\mathcal{R}^a$.
Agent $\ell$ then chooses a cluster depot at its own final position (i.e. $\rho_f(\mathcal{R}^{\ell}))$ and appends additional movements to each control sequence such that for each agent $a\in K$, the control sequence corresponding to the shortest path from $\rho_f(\mathcal{R}^a)$ to $\rho_f(\mathcal{R}^{\ell})$ is appended to the end of $\mathcal{R}^a$ (followed by the required number of \textsc{wait} controls such that the control sequence length is $\lambda(\psi)$.)
Agent $\ell$ then initiates the broadcast of $\mathcal{R}$ as normal.
As such, the final position of all agents in $K$ is necessarily $\rho_f(\mathcal{R}^{\ell})$.
In this scheme, we also prevent exploring agents from interfering with existing clusters, which will be a useful for our analysis.
If an agent $a\in K$ is visiting a task location $v\in \nu(K)$, agent $a$ leaves a special ``token'' at $v$.
If then $v$ appears in the view of an agent $a'\notin K$, $a'$ will still recognize $v$ as a task location and treat it as such (even though visiting $v$ does not increase the number of completed tasks).
We refer to this as the \emph{token mechanism}.

The following theorem specifies a cost-improvement guarantee for DMAR-GCI, and relies on the cost-improvement properties of multiagent rollout as seen in~\cite{bertsekasMAR}.
\begin{theorem}
    \label{thm:gci}
    For any instance $I$ of UMVRP-L, the total cost generated by DMAR-GCI from an execution on $I$ is no larger than the total cost generated by an execution of BP-GCI on $I$.
\end{theorem}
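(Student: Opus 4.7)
My plan is to prove the theorem via a coupling argument paired with per-cluster application of the cost-improvement property of multiagent rollout from~\cite{bertsekasMAR}. Specifically, I would couple two parallel executions of DMAR-GCI and BP-GCI so that they share the same random bits throughout: SOAC tie-breaks, join-request arbitration (line~\ref{try} of Algorithm~\ref{alg:overview}), and the uniformly random controls chosen by non-cluster agents during EM.

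The argument proceeds by induction on the round index $r$, with the inductive claim that after $r$ rounds the two coupled executions occupy the same global state (agent positions and remaining tasks) and the cumulative DMAR-GCI cost is no larger than the cumulative BP-GCI cost. For the inductive step, matched states together with Lemmas~\ref{lemma:SOAC2} and~\ref{lemma:LMA} imply that SOAC and LMA produce identical cluster trees $\mathcal{T}_K$ and maps $\mathcal{M}(K)$ in both executions. The token mechanism is invoked here: it ensures that non-cluster explorers see the same task tokens and hence, under the coupling, make identical moves and incur identical costs, so they contribute equally to the round in both executions. For each cluster $K$ I would then interpret TMAR as applying multiagent rollout to the \emph{augmented} cluster problem whose objective is to complete all tasks in $\mathcal{M}(K)$ \emph{and} gather every agent of $K$ at a common depot in $\nu(K)$. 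Under this framing, BP-GCI's sequence (greedy $\mathcal{B}$ followed by depot-gathering to $\rho_f(\mathcal{B}^{\ell})$) is exactly the base-policy trajectory, and DMAR-GCI's sequence ($\mathcal{R}$ with its depot-gathering tail to $\rho_f(\mathcal{R}^{\ell})$) is the one-step-lookahead improvement, so the Bertsekas cost-improvement property gives that the per-cluster DMAR-GCI cost is at most the per-cluster BP-GCI cost. Summing per-cluster and per-explorer contributions yields the inductive step.

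The main obstacle I expect is reconciling the inductive coupling of global states with the fact that the depot, defined as the leader's final position under the computed $\mathcal{R}$, is in principle algorithm-dependent and may differ between DMAR-GCI and BP-GCI, which would threaten the state-matching required by the induction. I would address this by folding the depot choice into the augmented-problem policy itself, so that both trajectories are compared on the same augmented objective regardless of where the depot lands. Any residual mismatch in the post-cluster state is then confined to the bounded view $\nu(K)$ of a dissolved cluster and can either be erased by a canonical depot rule (e.g.\ fixing the depot to $\rho(\ell)$ at the start of TMAR, which under the inductive hypothesis agrees across executions), or bypassed entirely by replacing the strict state-coupling induction with a direct round-by-round summation that relies only on the per-cluster cost-improvement inequality above and the identical random-walk cost of non-cluster explorers guaranteed by the token mechanism.
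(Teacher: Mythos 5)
Your plan follows essentially the same route as the paper's proof: couple the two executions through a common source of randomness, argue by induction on rounds that the coupled executions yield identical clustering sequences and identical agent positions at the end of each round (with the token mechanism guaranteeing that exploring agents' trajectories and costs coincide), and then invoke the cost-improvement property of multiagent rollout from~\cite{bertsekasMAR}. The difference lies in how that last step is carried out. The paper does not compare cluster by cluster on an augmented ``tasks plus gathering'' objective; it instead observes that BP-GCI as a whole (random exploration plus the greedy nearest-neighbor moves inside clusters) is a \emph{sequentially consistent} base policy, and concludes directly from~\cite{bertsekasMAR} that DMAR-GCI, viewed as rollout on top of BP-GCI, can perform no worse. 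Your per-cluster framing is a reasonable alternative, but carries two caveats. First, as actually specified, TMAR runs rollout on the task-completion objective over $\mathcal{M}(K)$ and the depot-gathering tail is appended afterwards; recasting this as rollout on an augmented objective reinterprets the algorithm, so you would still need to argue that the appended shortest-path tails preserve the per-cluster inequality rather than assume rollout was performed on the augmented problem. Second, your fallback of abandoning the state-coupling induction in favor of a direct round-by-round summation does not stand on its own: without matched end-of-round states, the clusters formed in later rounds need not correspond across the two executions, so there is nothing well-defined to sum against; the paper keeps the positional invariant precisely for this reason. On the positive side, you correctly surfaced the depot subtlety---the depot is the leader's final position under the computed controls and is therefore in principle policy-dependent---which the paper's proof passes over with the assertion that all cluster agents ``will be gathered at their respective depots''; fixing a canonical depot such as $\rho(\ell)$, as you suggest, would make that invariant airtight, at the price of slightly modifying DMAR-GCI as defined.
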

\begin{proof}
We claim by induction on the number of rounds of DMAR-GCI/BP-GCI that under an identical source of randomness, parallel executions off DMAR-GCI and BP-GCI yield identical clustering sequences.
We demonstrate the following invariant: 
Under an identical source of randomness, at the end of each round of DMAR-GCI/BP-GCI, each individual agent occupies the same respective node in both executions. 
As the source of randomness is common, the initial clusters formed will be identical in both executions. 
Moreover, at the end of the first round, the positions of all agents that are members of clusters will be the same in both executions, as all such agents will be gathered at their respective depots.
By the token mechanism, each agent that is not a member of a cluster will also end the first round of DMAR-GCI/BP-GCI in the same location in both instances, as differences in the order in which BP-GCI and DMAR-GCI complete tasks will not influence the trajectories of exploring agents.
Hence, in both executions, each agent occupies the same respective position at the end of the first round.
For an arbitrary round, if we assume agents occupy the same respective positions in both runs of DMAR-GCI/BP-GCI, then an analogous argument demonstrates that the invariant holds for the next round, as the initial agent positions are arbitrary and identical in both executions.

The policy generated by BP-GCI for each agent is given by its movement sequences, which are a combination of random exploration movements, as well as movements within clusters dictated by the greedy nearest-neighbor base heuristic.
As these random movements are identical regardless of whether generated by DMAR-GCI and BP-GCI, and the greedy nearest-neighbor base heuristic applied within a cluster is \emph{sequentially consistent} (i.e. the control selected is dependent only on the current state; see~\cite{bertsekasMAR}), BP-GCI is also a sequentially consistent base policy for DMAR-GCI.
From~\cite{bertsekasMAR}, this property suffices to conclude that DMAR-GCI can perform no worse than BP-GCI.
\end{proof}

As the exploration costs are identical between DMAR-GCI and BP-GCI, they can be discounted, and hence we may directly compare the performance of our rollout scheme to the base policy without accounting for exploration.
Note that DMAR-GCI is purely an analytical tool that is not meant to be used in practice.
It's purpose is to allow us to compare DMAR with the base policy directly, without accounting for volatile random exploration.
In Appendix~\ref{appendix:additional_simulations}, we show that DMAR-GCI approximates DMAR (overhead is added for agents congregating at the depots), and hence the DMAR-GCI versus BP-GCI comparison may provide a better sense of the cost-improvement effect of DMAR for UMVRP-L.

\section{Comprehensive simulation results}
\label{appendix:additional_simulations}
The full results (i.e. for each grid size $10\times10,20\times20,\ldots,80\times80$) from comparing the average total costs generated by both DMAR and BP, as well as the corresponding average runtimes are presented in Figure~\ref{fig:results-appendix}. 
These results fall in line with the analysis given in Section~\ref{sec:simulations-d}; that is, the critical radius is approximately 4, and the effective region is between $k=8$ and $k=12$, wherein there exists a radius for which we observe an approximate 2-factor relative cost improvement.
Moreover, beyond the effective range, the running time of DMAR increases seemingly exponentially.
The only exceptions are the $10\times10$ grids, where DMAR outperforms the base policy for all radii tested on average.
However, these small instances do not represent the trends we find from scaling the instances.

\begin{figure}
     \centering
     \begin{subfigure}[b]{0.95\textwidth/4}
         \centering
         \includegraphics[width=\textwidth]{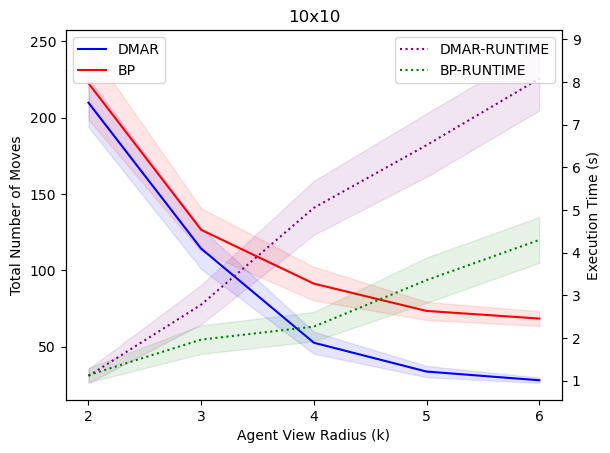}
         \caption{}
     \end{subfigure}
     \hspace{-2.3mm}
     \begin{subfigure}[b]{0.95\textwidth/4}
         \centering
         \includegraphics[width=\textwidth]{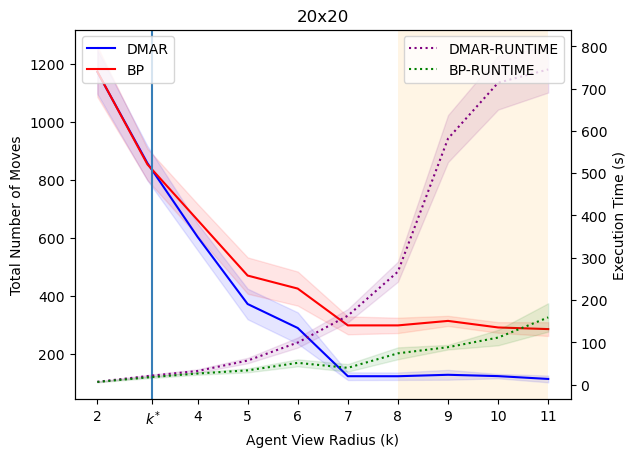}
         \caption{}
     \end{subfigure}
     \hspace{-2.3mm}
     \begin{subfigure}[b]{0.95\textwidth/4}
         \centering
         \includegraphics[width=\textwidth]{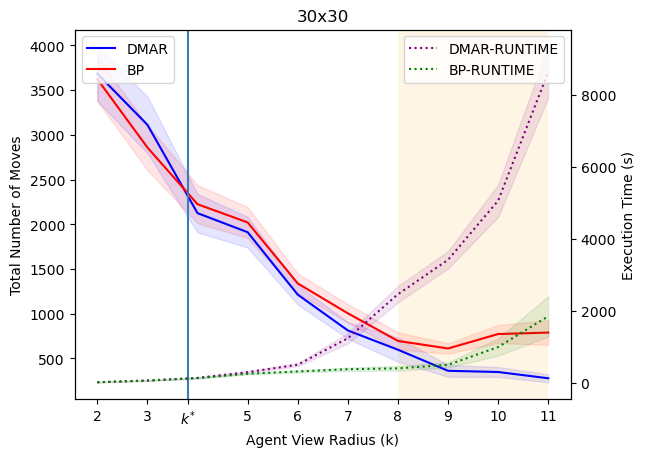}
         \caption{}
     \end{subfigure}
    \hspace{-2.3mm}
    \begin{subfigure}[b]{0.95\textwidth/4}
         \centering
         \includegraphics[width=\textwidth]{40-BP_vs_DMAR.png}
         \caption{}
     \end{subfigure}
    \hspace{-2.3mm}
    \begin{subfigure}[b]{0.95\textwidth/4}
         \centering
         \includegraphics[width=\textwidth]{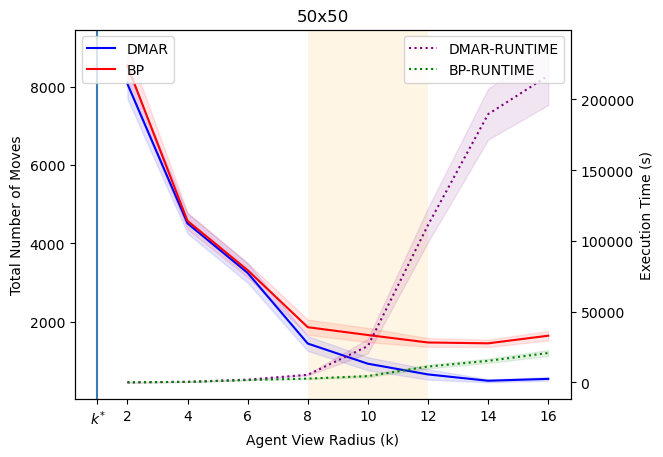}
         \caption{}
     \end{subfigure}
     \hspace{-2.3mm}
     \begin{subfigure}[b]{0.95\textwidth/4}
         \centering
         \includegraphics[width=\textwidth]{60-BP_vs_DMAR.png}
         \caption{}
     \end{subfigure}
     \hspace{-2.3mm}
     \begin{subfigure}[b]{0.95\textwidth/4}
         \centering
         \includegraphics[width=\textwidth]{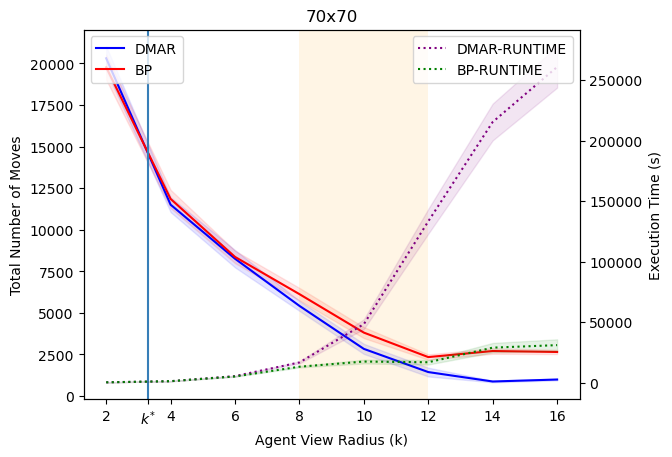}
         \caption{}
     \end{subfigure}
    \hspace{-2.3mm}
    \begin{subfigure}[b]{0.95\textwidth/4}
         \centering
         \includegraphics[width=\textwidth]{80-BP_vs_DMAR.png}
         \caption{}
     \end{subfigure}
        \caption{Experimental results. 
        (a-h, left vertical axis) Discrete-space simulation plots show the average cost of the greedy-exploration base policy (red curves) vs average cost of DMAR (blue curves) on $10\times10$, $20\times20$, $\ldots$, $80\times80$ grids. (a-h, right vertical axis). Curves correspond to average running time in seconds for both DMAR and the base policy. Critical radii are marked as $k^*$ are marked, effective ranges are represented by the shaded orange boxes. 95\% confidence intervals are given by shaded regions around their respective means.
       }
        \label{fig:results-appendix}
\end{figure}

\subsection{Simulation results for DMAR-GCI}
In Figure~\ref{fig:results-depot}, we present the results for comparing the average total costs of DMAR-GCI and BP-GCI. 
Each data point ($k$-value for a specific grid size) is an average of 50 or more individual simulations over various topologies and agent-to-task rations.
We validate Theorem~\ref{thm:gci} by observing that the average total costs obtained from DMAR-GCI are no larger than those of BP-GCI for any $k$-value or grid size considered.
We see that in comparison to DMAR (Figure~\ref{fig:results-appendix}), the average costs generated by DMAR-GCI are slightly larger than those generated by DMAR, although this difference becomes less pronounced as the grid size increases.
We also observe that the proportion of total average cost attributed to random exploration moves is identical for both algorithms.
Consequently, in Figure~\ref{fig:noexp}, we present the results from Figure~\ref{fig:results-depot} with the exploration costs subtracted out to better visualize the cost improvement obtained by DMAR-GCI. 
As a result, the costs appear to increase with $k$, although this is only due to the fact that the proportion of average total cost attributed to random exploration decreases with $k$ (since more of the environment is visible to agents).
Nevertheless, in the effective range ($k=8$ to $k=12$) and for each grid size, there exists a radius for which we observe a cost improvement factor between 1.7 and 2.
As DMAR-GCI approximates the costs generated by DMAR, we cite DMAR-GCI to further support the cost-improvement claims with respect to DMAR.

\begin{figure}
     \centering
     \begin{subfigure}[b]{0.95\textwidth/4}
         \centering
         \includegraphics[width=\textwidth]{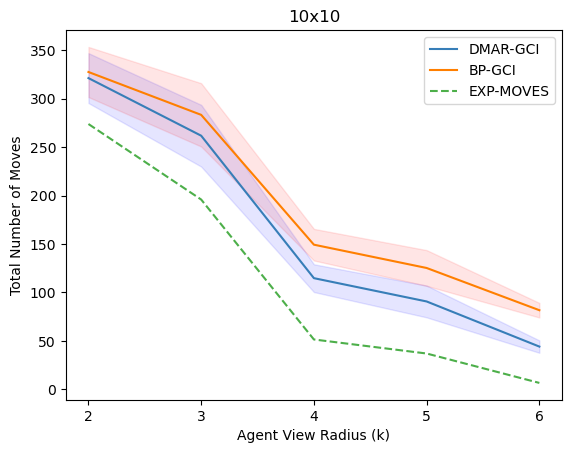}
         \caption{}
     \end{subfigure}
     \hspace{-2.3mm}
     \begin{subfigure}[b]{0.95\textwidth/4}
         \centering
         \includegraphics[width=\textwidth]{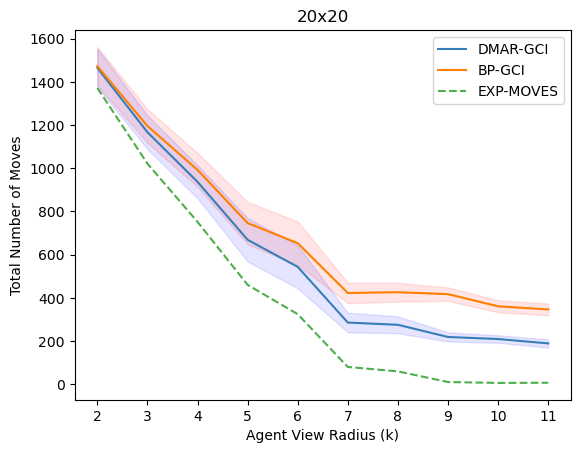}
         \caption{}
     \end{subfigure}
     \hspace{-2.3mm}
     \begin{subfigure}[b]{0.95\textwidth/4}
         \centering
         \includegraphics[width=\textwidth]{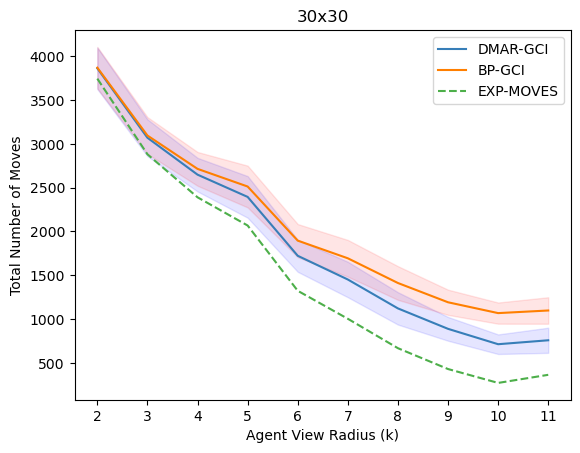}
         \caption{}
     \end{subfigure}
    \hspace{-2.3mm}
    \begin{subfigure}[b]{0.95\textwidth/4}
         \centering
         \includegraphics[width=\textwidth]{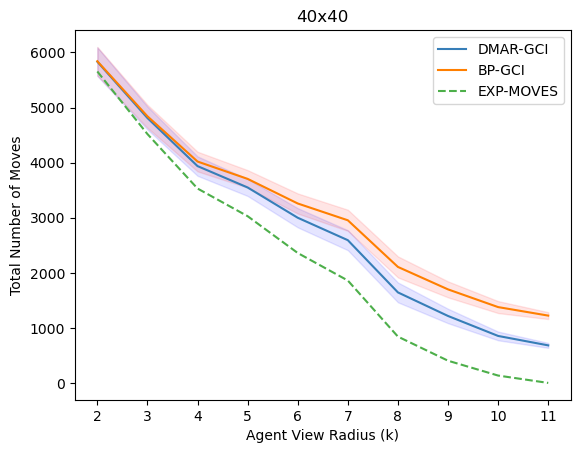}
         \caption{}
     \end{subfigure}
    \hspace{-2.3mm}
    \begin{subfigure}[b]{0.95\textwidth/4}
         \centering
         \includegraphics[width=\textwidth]{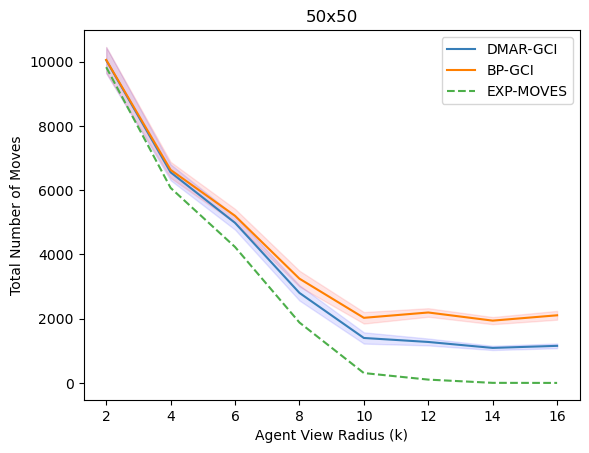}
         \caption{}
     \end{subfigure}
     \hspace{-2.3mm}
     \begin{subfigure}[b]{0.95\textwidth/4}
         \centering
         \includegraphics[width=\textwidth]{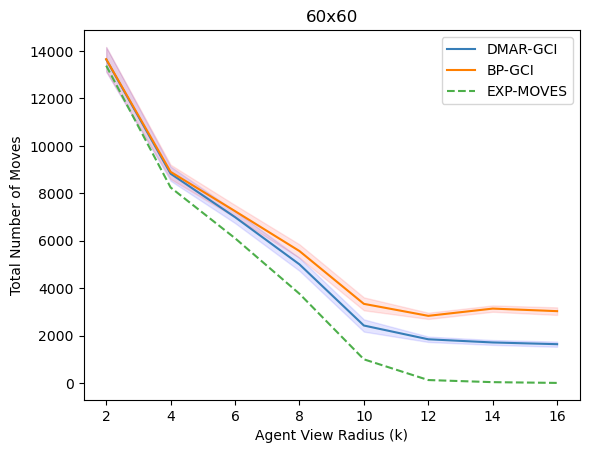}
         \caption{}
     \end{subfigure}
     \hspace{-2.3mm}
     \begin{subfigure}[b]{0.95\textwidth/4}
         \centering
         \includegraphics[width=\textwidth]{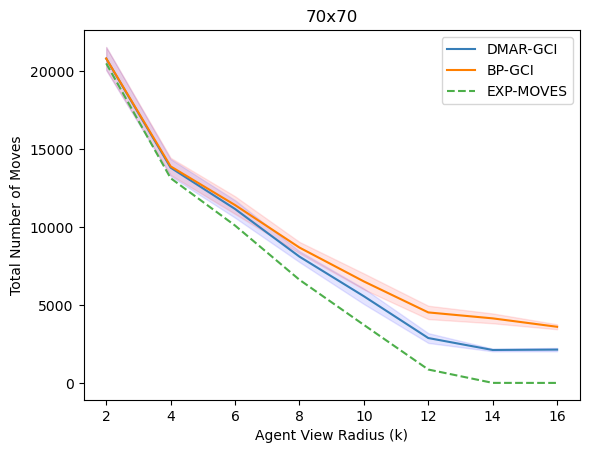}
         \caption{}
     \end{subfigure}
    \hspace{-2.3mm}
    \begin{subfigure}[b]{0.95\textwidth/4}
         \centering
         \includegraphics[width=\textwidth]{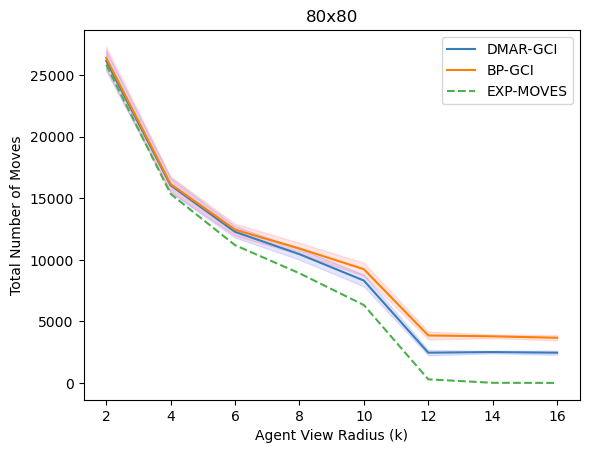}
         \caption{}
     \end{subfigure}
        \caption{Experimental results. 
        Discrete-space simulation plots show the average cost of BP-GCI (red curves) vs average cost of DMAR-GCI (blue curves) on $10\times10$, $20\times20$, $\ldots$, $80\times80$ grids. The green dashed curve represents the proportion of total cost that is attributed to exploration moves. This is identical for both algorithms. 95\% confidence intervals are given by shaded regions around their respective means.
       }
        \label{fig:results-depot}
\end{figure}

\begin{figure}
     \centering
     \begin{subfigure}[b]{0.95\textwidth/4}
         \centering
         \includegraphics[width=\textwidth]{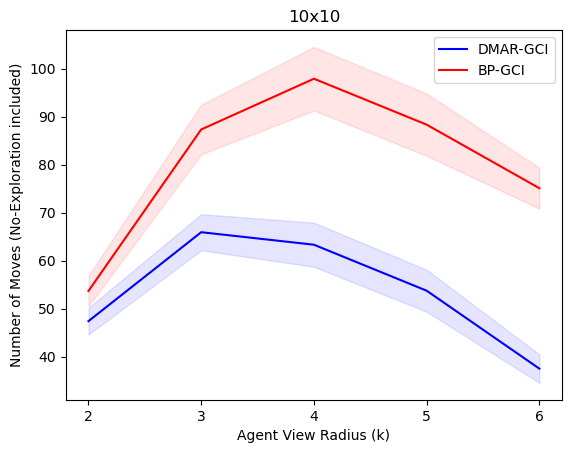}
         \caption{}
     \end{subfigure}
     \hspace{-2.3mm}
     \begin{subfigure}[b]{0.95\textwidth/4}
         \centering
         \includegraphics[width=\textwidth]{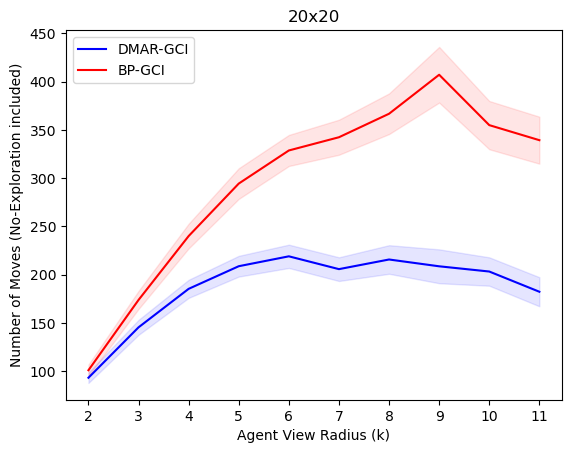}
         \caption{}
     \end{subfigure}
     \hspace{-2.3mm}
     \begin{subfigure}[b]{0.95\textwidth/4}
         \centering
         \includegraphics[width=\textwidth]{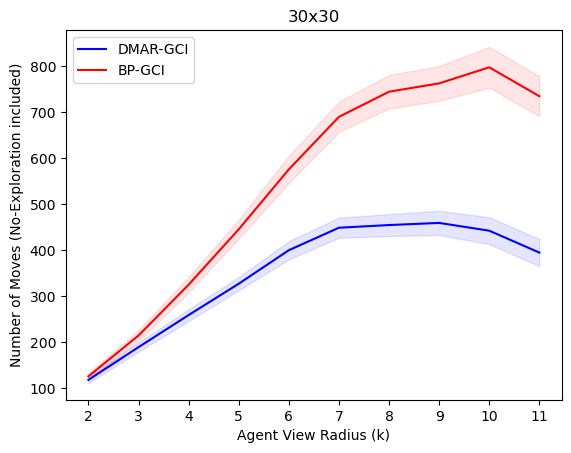}
         \caption{}
     \end{subfigure}
    \hspace{-2.3mm}
    \begin{subfigure}[b]{0.95\textwidth/4}
         \centering
         \includegraphics[width=\textwidth]{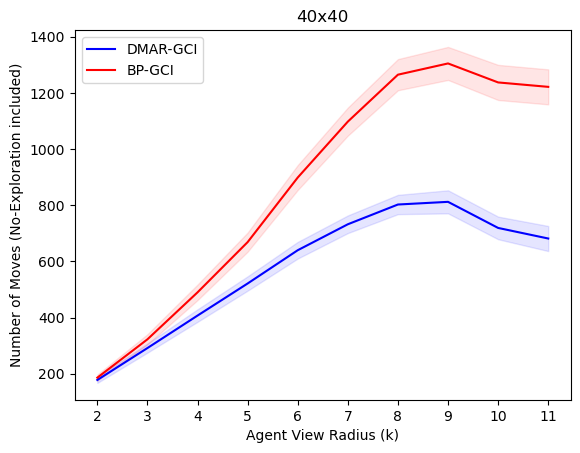}
         \caption{}
     \end{subfigure}
    \hspace{-2.3mm}
    \begin{subfigure}[b]{0.95\textwidth/4}
         \centering
         \includegraphics[width=\textwidth]{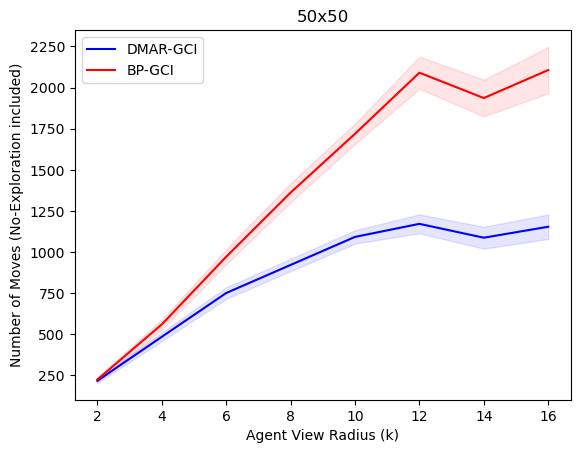}
         \caption{}
     \end{subfigure}
     \hspace{-2.3mm}
     \begin{subfigure}[b]{0.95\textwidth/4}
         \centering
         \includegraphics[width=\textwidth]{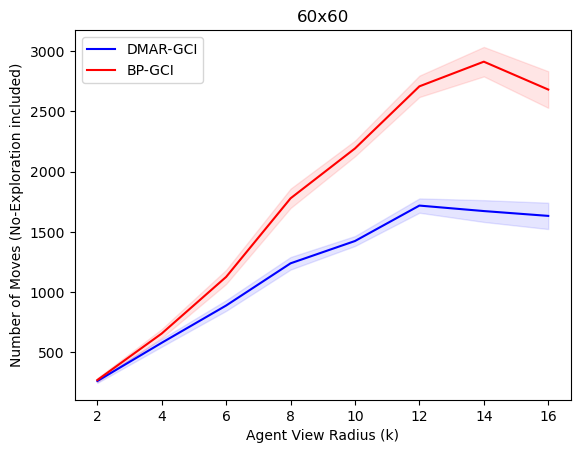}
         \caption{}
     \end{subfigure}
     \hspace{-2.3mm}
     \begin{subfigure}[b]{0.95\textwidth/4}
         \centering
         \includegraphics[width=\textwidth]{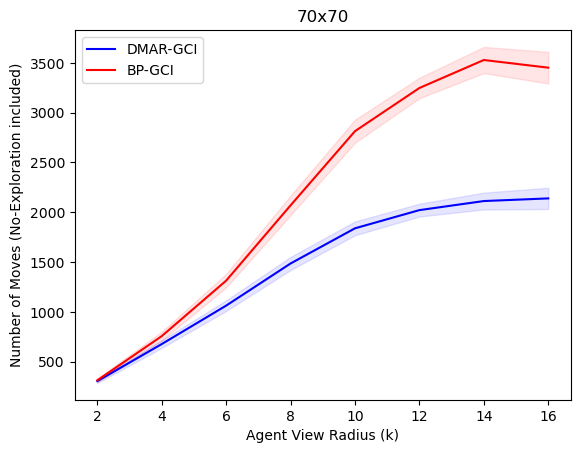}
         \caption{}
     \end{subfigure}
    \hspace{-2.3mm}
    \begin{subfigure}[b]{0.95\textwidth/4}
         \centering
         \includegraphics[width=\textwidth]{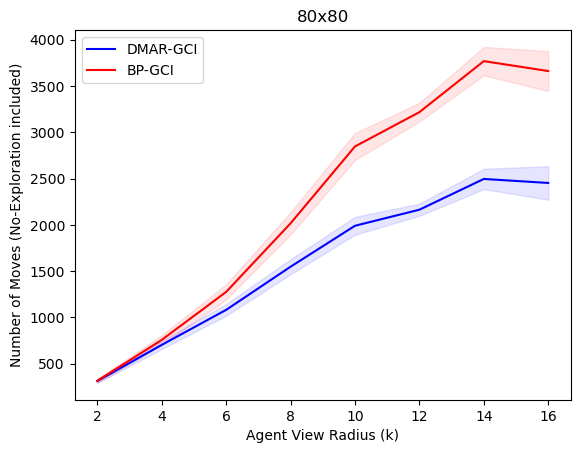}
         \caption{}
     \end{subfigure}
        \caption{Experimental results. 
        Simulation plots show the average cost of BP-GCI (red curves) vs average cost of DMAR-GCI (blue curves) on $10\times10$, $20\times20$, $\ldots$, $80\times80$ grids. The common exploration cost incurred by both algorithms has been subtracted out. 95\% confidence intervals are given by shaded regions around their respective means.
       }
        \label{fig:noexp}
\end{figure}

\section{Experimental methods: robotics specifications}
\label{appendix:robotarium}
Each robot in the Robotarium is approximately \qty{11}{\centi\meter}$\times$\qty{10}{\centi\meter} and operates in an arena (Robotarium testbed) of \qty{3}{\meter}$\times$\qty{2}{\meter}. The robots are initialized in the arena in accordance to the Cartesian coordinate plane with center of the arena being the origin of the plane. To transform the discrete-space version of DMAR to a continuous setting, we have approximated the arena (of size \qty{2}{\meter}$\times$\qty{2}{\meter}) as a $9\times9$ grid of cells with each cell of size \qty{20}{\centi\meter}$\times$\qty{20}{\centi\meter}. Robots do not strictly move as per the grid, but only use the center points of each cell as a guiding point when planning and moving. The actual motion of the robots is controlled using a hybrid controller, which first moves the robot on a linear path towards the goal position and then changes its orientation by rotating once at the goal position to point it in the desired direction. This provides for a smooth motion and allows for movement where robots can move past each other easily. The robots are based on a differential drive model and use single integrator dynamics to define a controller for the robot. The exact derivation and treatment of the robot dynamics are given in~\cite{robotarium}. These dynamics incorporate robust barrier functions that ensure that inter-robot and wall-robot collisions do not occur.

For the simulations to function on actual physical robots, two important requirements arise that the DMAR algorithm must address:
\begin{enumerate}
    \item \textbf{Collision barriers}: agents may not physically bump into each other while moving around in the arena.
    \item \textbf{Collision avoidance}: no agent occupies the same space as another.
\end{enumerate}

The barrier functions of the Robotarium platform keep the physical robots safe from damage, hence this requirement is met. Now, unlike the discrete simulations where robots may occupy the same location at a given time on the grid space, in a physical environment, two robots cannot occupy the same space at a given time. This required the use of a collision avoidance mechanism, which we have designed to be least obtrusive to the original algorithm's movement decisions.

All the phases of the Algorithm \ref{alg:overview} progress as before, except we augment the last phase of Execute Movement. The mechanism deployed to achieve collision avoidance is described in Algorithm \ref{alg:collision:avoid}. Note that although the implementation is presented in a centralized fashion, an equivalent decentralized execution scheme exists using message passing between neighboring agents since all information used is local. The mechanism maintains the locality constraints and is based only on the assumptions that each agent is able to communicate freely with another agent within a fixed radius.

\begin{algorithm}
\caption{Collision Avoidance for Continuous Space DMAR}\label{alg:collision:avoid}

\begin{algorithmic}
    \Repeat{}

    \State{1. Locate Waiting Agents.}
    
    \State{2. Detect Collisions.}
        
    \Repeat{ }
        
        \State{%
        3. Define Precedence.}
        
        \State{4. Backtrack or Move ahead.}
        
        \State{5. Decide how long to Wait.}

    \Until{there are no collisions}
    
    \State{6. Move Exploring Agents.}
    \Until{all agents have executed their control sequences} 
    \State{}
    \State{\textbf{Step Descriptions:}}
    \State{\textbf{Locate Waiting Agents:} Locate and note positions of agents that are exploring but have found a near-by task and are therefore waiting.}
    \State{\textbf{Detect Collisions:} Execute a single move for each agent that is in a cluster and has remaining moves to execute. After performing a single move for each such agent, there may exist locations where there are two or more agents trying to occupy the same position (collisions).}
    \State{\textbf{Define Precedence:} For each collision, we define a priority order and decide which agent continues to the new location  $\ell$ and which agent must remain at their previous location. Agents for whom $\ell$ is the final position (i.e. no additional moves remain in the control sequence or only \textsc{wait} moves remain), or those agents that are already waiting at $\ell$ are given preference.}
    \State{\textbf{Backtrack or Move ahead:} Agents with remaining moves are considered for movement away from the collision location by consulting their respective control sequences. This is done iteratively by finding next position for the agent (a trajectory is created one step at a time according to computed controls) until a location which is collision-free(safe) is found. If a safe location is found then the agent directly moves to that location, circumventing any resulting locations in between. If a safe location is not found, then the agent will simply wait at its original position for a specified amount of time.}
    \State{\textbf{Decide how long to Wait:} Now there are two scenarios to consider: (1.) a safe location may open in future for the agent that has backtracked, or (2.) there is no chance of any safe location in the trajectory in the future. In scenario (1.), the agent simply backtracks with a wait move and will try to execute its move again before the end of the current DMAR round. In scenario (2.), the agent will backtrack and call that location its final position. These scenarios can be determined by each agent from examination of the complete set of control sequences given.}
    \State{\textbf{Move Exploring Agents:} Exploring agents take lowest precedent. If an exploring agent would make a random move that results in a collision, the agent backtracks and chooses another random move. The agent will simply wait if all potential moves result in a collision.}

\end{algorithmic}
\end{algorithm}

\section{Compute specifications}
We used two sources of compute for our experiments:
\begin{enumerate}
    \item Traditional x86 compute nodes were used which contain two Intel Xeon E5-2680 v4 CPUs running at 2.40GHz with at least 128 GB of RAM. There are 28 high speed Broadwell class CPU cores per node. The operating system is CentOS Linux 7 (Core), Kernel: Linux 3.10.0-693.21.1.el7.x86\_64, Architecture: x86-64.
    \item A single compute node with AMD Ryzen 9 7950X Processor with 16 cores and 32 threads along with 64 GB of RAM. Physics based countinuous space simulations were exclusively run on this node. The operating system is Ubuntu 22.04.2 LTS, Kernel: Linux 6.2.0-76060200-generic, Architecture: x86-64.
\end{enumerate}

\section{Source Code}
Complete source code for our simulator can be found at \\
\href{https://github.com/SOPSLab/Distributed-Multi-Agent-Rollout}{https://github.com/SOPSLab/Distributed-Multi-Agent-Rollout}. The raw data from our simulations is available upon request.

\end{document}
